\def\R{{\mathbb R}}
\def\C{{\mathbb C}}
\def\N{{\mathbb N}}
\def\ts{\otimes}
\def\EE{{\mathbb E}}
\def\l{\lambda}
\def\Tr{\mathrm{Tr}}
\def\tr{\mathrm{tr}}
\def\Ton{\mathrm{T}}
\def\Hm{\mathrm{H}^{\uparrow}}
\def\Wm{\mathcal{W}^{\uparrow }}
\def\Fm{\mathcal{F}^{\uparrow}}
\def\l{\lambda}
\def\b{\beta}
\def\Hsm{\mathrm{H}^{\uparrow \hspace{-0,05 cm}\uparrow}}
\def\Wsm{\mathcal{W}^{\uparrow \hspace{-0,05 cm}\uparrow}}
\def\Fsm{\mathcal{F}^{\uparrow \hspace{-0,05 cm}\uparrow}}
\newtheorem{thm}{Theorem}
\newtheorem{conj}{Conjecture}
\newtheorem{cor}{Corollary}
\newtheorem{prop}{Proposition}
\newtheorem{lem}{Lemma}
\newtheorem{rmk}{Remark}
\newtheorem{ex}{Example}
\theoremstyle{definition}
\newtheorem*{notation}{Notation}
\newcommand{\be}{\begin{equation}}
\newcommand{\ee}{\end{equation}}
\newcommand{\ben}{\begin{equation*}}
\newcommand{\een}{\end{equation*}}
\title[Integer moments of complex Wishart matrices and Hurwitz numbers]{Integer moments of complex Wishart matrices\\ and Hurwitz numbers
}
\author[F. D. Cunden]{Fabio Deelan Cunden}
\address{School of Mathematics and Statistics, University College Dublin, Belfield, Dublin 4, Ireland}
\email{fabio.cunden@ucd.ie}
\author[A. Dahlqvist]{Antoine Dahlqvist}
\address{School of Mathematics and Statistics, University College Dublin, Belfield, Dublin 4, Ireland}
\email{antoine.dahlqvist@ucd.ie}
\author[N. O'Connell]{Neil O'Connell}
\address{School of Mathematics and Statistics, University College Dublin, Belfield, Dublin 4, Ireland}
\email{neil.oconnell@ucd.ie}
\thanks{Research of FDC, AD and NO'C supported by ERC Advanced Grant 669306. Research of FDC partially supported by Gruppo Nazionale di Fisica Matematica  GNFM-INdAM}
\begin{document}
  \maketitle
  
\begin{abstract} We give formulae for the cumulants of  complex Wishart (LUE) and inverse Wishart matrices (inverse LUE). Their large-$N$ expansions are generating functions of double (strictly and weakly) monotone Hurwitz numbers which count constrained factorisations in the symmetric group. The two expansions can be compared and combined with a duality relation proved in  [F. D. Cunden, F. Mezzadri, N. O'Connell and N. J. Simm,  arXiv:1805.08760] to obtain:  i) a combinatorial proof of the reflection formula between moments of LUE and inverse LUE at genus zero and, ii) a new functional relation between the generating functions of monotone and strictly monotone Hurwitz numbers. The main result  resolves the integrality conjecture formulated in [F. D. Cunden, F. Mezzadri, N. J. Simm and P. Vivo,  J.~Phys.~A 49 (2016)] on the time-delay cumulants in quantum chaotic transport. The precise combinatorial description of the cumulants given here may cast new light on the concordance between random matrix and semiclassical theories.
\end{abstract}

\section{Introduction and results}
\subsection{Time-delay matrix and an integrality conjecture}
\par
Random matrices have been used to model a variety of scattering phenomena in complex systems including heavy nuclei, disordered mesoscopic 
conductors, and chaotic quantum billiards. See, e.g.,~\cite{Been,GUHR1998,Weiden,Schomerus}. The time-dependent aspects of a scattering process are usually 
described by the time-delay (or Wigner-Smith) matrix $Q$. Its eigenvalues $\tau_j$ are called \emph{proper delay times} 
and can be thought as the time spent by an incident wave in the scattering region at a  propagating mode (or open channel) $j=1,\dots,N$. See~ 
\cite{T16} for a modern introduction.
\par
A statistical approach to the time-delay based on random matrices was developed in the 1990s, see~\cite{LSSS,FS,SSS,GMB}. For ballistic quantum dots with 
perfect coupling (a physical realisation of chaotic quantum billiards), Brouwer, Frahm, and Beenakker~\cite{BFB97,BFB99} argued that the inverses of the 
proper delay times $\l_j=(N\tau_j)^{-1}$ are distributed according to the Laguerre ensemble of random matrix theory
\begin{align}
p(d\l_1,\dots,d\l_N)&=c_{N,\b}\prod_{i<j}|\l_i-\l_j|^{\beta}\prod_{k}\l_k^{\b N/2}e^{-\b N\l_k/2}\chi_{\R_+}(\l_k)d\l_k,\label{eq:Lag}
\end{align}
where $\beta\in\{1,2,4\}$ indicates orthogonal, unitary, or symplectic symmetry, respectively, and $c_{N,\b}$ is a normalisation constant. 
This provided a route to apply various techniques from random matrix theory for the calculations of expectation values, typical fluctuations and tails of 
the distributions of the time-delay moments $\tr Q^k$. See~\cite{MMG,SFS,BK,KSS,TM,C,CMSV,CMSV2,Novaes1,Novaes11,MS1,MS2,MS3}.
\begin{notation}
$\Tr$ denotes the non-normalised trace on $\mathcal{M}_N(\C)$, and $\tr= \frac 1 N \Tr.$  For $n\in\N,$ we set $[n]=\{1,\dots,n\}$, and $\mathcal{P}(n)$ is 
the set of partitions of $[n]$. If $(Y_1\dots,Y_\ell)$ are random variables (not necessarily distinct) on the same probability space with finite moments, their $\ell$th cumulant (or connected average) is defined according to the formula $C_\ell(Y_1,\dots,Y_\ell)=\sum_{\pi\in\mathcal{P}(\ell)}(|
\pi|-1)!(-1)^{|\pi|-1}\prod_{B\in\pi}\EE\prod_{i\in B}Y_i$.
\end{notation}
\par
The joint law~\eqref{eq:Lag} of the eigenvalues of $W=(NQ)^{-1}$ defines a $\beta$-ensemble ($\beta>0$) with a strictly convex potential. This case belongs to the class of  one-cut, off-critical ensembles, for which Borot and Guionnet~\cite{BG} proved the existence of asymptotic $1/N$-expansions determined by recursive relations known as `loop equations'.
For instance, the generating series of the cumulants (also called `correlators')
 \be
G_{\ell,\beta}(z_1,\dots,z_l)=C_\ell\left(\tr\frac{1}{z_1-W},\dots,\tr\frac{1}{z_l-W}\right)
\ee
admit large-$N$ asymptotic expansions of the form
\be
G_{\ell,\b}(z_1,\dots,z_\ell)=\frac{1}{\left(\beta N^2\right)^{n-1}}\sum_{g\geq0}N^{-g}G_{\ell,\b}^{\{g\}}(z_1,\dots,z_\ell),
\label{eq:topol}
\ee
where $G_{\ell,\b}^{\{g\}}$ has a very simple dependence in $\b$
\be
G_{\ell,\b}^{\{g\}}(z_1,\dots,z_\ell)=\sum_{k=0}^{\lfloor g/2\rfloor}\b^{-k}\left(\frac{1}{2}-\frac{1}{\b}\right)^{g-2k}G_{\ell}^{\{k;g-2k\}}(z_1,\dots,z_\ell).
\ee
(See~\cite{BG} for details.) The coefficients $G_{\ell}^{\{k;g-2k\}}$ can be computed recursively using the Chekhov-Eynard topological recursion\cite{CE}. It is easy to check that, 
when $\b=2$,  $G_{\ell,2}^{\{g\}}=0$ if $g$ is odd, and~\eqref{eq:topol} is an expansion in powers of $1/N^2$.
\par
In \cite{CMSV}, using methods devised by Ambj\o rn, Chekhov, Kristjansen, and Makeenko\cite{ACKM}, 
the explicit form of the leading order $G^{\{0\}}_{\ell,\b}(z_1,\dots,z_\ell)$, and the large-$N$ limit of the cumulants
\be
\lim_{N\to\infty}\left(\b N^2\right)^{\ell-1}C_\ell\left(\tr W^{-\mu_1},\dots,\tr W^{-\mu_\ell}\right)=c_0(\mu_1,\dots,\mu_\ell)
\label{eq:limit}
\ee 
were analysed. (The limit does not depend on $\b$.) Extensive computations of some families of 
$c_0(\mu_1,\dots,\mu_\ell)$'s led the authors to the following integrality conjecture.
\begin{conj}[\!\!\cite{CMSV}] 
\label{conj:integers} For all $\ell\geq1$ and $(\mu_1,\dots,\mu_\ell)\in\N^{\ell}$, 
\ben
c_0(\mu_1,\dots,\mu_\ell)\in\N.
\een
\end{conj}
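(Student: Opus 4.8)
The plan is to reduce the statement to a positivity property of Hurwitz numbers. Since the limit $c_0(\mu_1,\dots,\mu_\ell)$ is independent of $\b$, it suffices to work at $\b=2$, where $W$ is a complex Wishart (LUE) matrix and $W^{-1}$ is an inverse LUE matrix, and $\tr W^{-\mu_j}$ is (up to scaling) a moment of the time-delay matrix. The aim is to produce an \emph{exact} finite-$N$ formula for the cumulant $C_\ell(\tr W^{-\mu_1},\dots,\tr W^{-\mu_\ell})$, organised as a series in $1/N$ whose coefficients count constrained factorisations in the symmetric group, and then read off the leading coefficient.

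First I would compute the mixed moments $\EE\prod_{j}\Tr W^{-\mu_j}$ at $\b=2$ by Weingarten calculus. Expressing the inverse Wishart matrix through a Haar-unitary average brings in the unitary Weingarten function $\Wg$, whose $1/N$-expansion is, by the known combinatorial interpretation (Collins--Matsumoto and Novak), a generating function for monotone walks on the Cayley graph of the symmetric group generated by transpositions. This rewrites each moment as a sum over pairs consisting of a permutation encoding the index contractions together with a monotone factorisation into transpositions, weighted by an explicit power of $N$ determined by the number of resulting cycles, that is, by an Euler-characteristic count.

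Next I would pass from moments to cumulants via the moment--cumulant formula in the Notation block. Taking the connected average should retain only the \emph{transitive} factorisations, those whose associated permutations act transitively on $[\,|\vec\mu|\,]$, the disconnected contributions cancelling against products of lower moments. Collecting powers of $N$ and matching the normalisation $(\b N^2)^{\ell-1}$ in~\eqref{eq:limit}, the surviving leading term is the genus-zero double monotone Hurwitz number attached to the cycle type $(\mu_1,\dots,\mu_\ell)$ (the strictly or weakly monotone variant, as dictated by the inverse ensemble). Being by definition the cardinality of a finite set of monotone transitive factorisations, this number lies in $\N$, which is the claim.

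The main obstacle is the bookkeeping that isolates the genus-zero piece with coefficient exactly $+1$ per factorisation. Two points need care: showing that the connected average removes precisely the non-transitive factorisations, so that no signed cancellations spoil positivity, and verifying that the power of $N$ carried by a factorisation is controlled by a genuine genus, so that the leading $N$-order is attained only by the planar configurations and the subleading corrections do not interfere. Once the exact cumulant formula is established, the integrality asserted in Conjecture~\ref{conj:integers} follows immediately from the combinatorial definition of the Hurwitz numbers.
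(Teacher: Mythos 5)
Your proposal follows essentially the same route as the paper: the moments of the inverse Wishart matrix are expressed through the unitary Weingarten function (via the Graczyk--Letac--Massam/Collins--Matsumoto--Saad formula), Novak's Jucys--Murphy factorisation turns its expansion into a sum over weakly monotone transposition walks, the passage to cumulants retains exactly the transitive factorisations, and the leading coefficient in $1/N$ is identified as a genus-zero monotone double Hurwitz number, hence a nonnegative integer. This is precisely the paper's proof of Theorem~\ref{MainThV2}, Theorem~\ref{MainTh} and the ensuing Corollary, so the argument is correct and not a genuinely different approach.
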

\par
The present work started as an attempt to prove the conjecture. In this paper we provide an explicit formula for the $1/N$-expansion of the cumulants 
\be
C_\ell\left(\tr W^{-\mu_1},\dots,\tr W^{-\mu_\ell}\right)=\frac{1}{\left(2 N^2\right)^{\ell-1}}\sum_{g\geq0}N^{-g}c_g(\mu_1,\dots,\mu_\ell)
\label{eq:asymp}
\ee 
when $\beta=2$. The result not only resolves Conjecture~\ref{conj:integers}, but shows that \emph{the full $1/N$-expansion has positive integer coefficients} (i.e., $c_g(\mu_1,\dots,\mu_\ell)\in\N$) whose combinatorial interpretation we describe completely in 
terms of constrained factorisations in the symmetric group. In fact, the large-$N$ asymptotics~\eqref{eq:asymp} is a `genus' expansion. 

\subsection{Complex Wishart matrices and the Laguerre unitary ensemble} For any real number $M> N-1$, consider the following probability measure supported on the cone of  positive definite $N\times N$ complex Hermitian matrices
\be
 \gamma(dX)=\frac{N^{NM}}{\pi^{N(N-1)/2}\prod_{j=0}^{N-1} \Gamma(M- j)} \left(\det X\right)^{ M-N} \exp\left({- N\Tr X }\right) dX.
\ee
A random matrix $W$ distributed according to the above measure is a \emph{complex Wishart matrix} with parameter $M$. It is also 
quite common to use the parameters $c=M/N$, or $\alpha=M-N$. The eigenvalues of $W$ (we drop the dependence on $N$ and $c$ for notational 
convenience) are distributed according to
\begin{align*}
p(d\l_1,\dots,d\l_N)&=c_{N}\prod_{i<j}|\l_i-\l_j|^{2}\prod_{k}\l_k^{M-N}e^{-N\l_k}\chi_{\R_+}(\l_k)d\l_k\\
\label{eq:Lag2}
&c_{N}^{-1}=\frac{N!}{N^{MN}}\prod_{j=1}^{N}\Gamma\left(\alpha+j\right)\Gamma\left(j\right).
\end{align*}
This is the Laguerre Unitary Ensemble (LUE for short) of random matrix theory.  When $M$ is an integer, there is the equality in law $W=N^{-1}XX^{\dagger}$, where $X$ is a $N \times M$ random matrix with independent standard Gaussian entries~\cite{Muirhead}. When $\beta=2$, Eq.~\eqref{eq:Lag} is of this type for the particular 
choice $M=2N$ (or $c=2$, $\alpha=N$). 

\subsection{Statement of results}

\begin{notation} When $\sigma$ is a permutation, an integer partition or a set partition, we denote by $\#\sigma$ its number of cycles (resp. blocks). 
For a random matrix $X$ of size $N$, with coefficients having joint moments  of  homogeneity $n\in \N$, we shall denote  for any integer partition  $
\mu=(\mu_1,\ldots,\mu_\ell) \vdash n, $  the scaled cumulant
 \begin{equation}
  \mathcal{C}_X(\mu)=  \frac{|\mu|!}{z_\mu} N^{2(\#\mu-1)} C_{\#\mu}(\tr(X^{\mu_1}), \ldots, \tr(X^{\mu_l})), \label{ScaledCumulant}
 \end{equation}
 where $|\mu|=n$, $\#\mu=\ell$ and $z_\mu=\prod_{i\geq1}m_i!i^{m_i}$ ($m_i$ being the number of parts of $\mu$ equal to $i$).
\end{notation}
\par
The main purpose of this paper is to explain that, for the LUE and inverse LUE, ~\eqref{ScaledCumulant} counts combinatorial quantities, related to 
factorisations in the symmetric group.
    \begin{thm}
    \label{MainTh} 
    Fix $n\in \N$, $n\geq1$, and $\mu\vdash n$. Then,  
  \begin{align}
  \mathcal{C}_{W^{-1}}(\mu)  &=   \sum_{g \ge 0 }    N^{-2g}\sum_{\nu\vdash n }   (c-1)^{-(n+ 2g-2+\#\mu+\#\nu)}   \Hm_g(\mu,\nu) &\text{for }&c> 1+
\frac{n}{N},\label{mainth:CumulantInverse}\\
  \mathcal{C}_W(\mu)& =   \sum_{g \ge 0 }    N^{-2g}\sum_{\nu\vdash n }  c^{n-(2g-2+\#\mu+\#\nu)}   \Hsm_g(\mu,\nu) &\text{for }&c> 
1-\frac{1}{N}.\label{propeq:CumulantWishartE}
    \end{align}
$\Hm_g(\mu,\nu) $ is the number of  tuples  $(\alpha, \tau_1,\ldots, \tau_{r}, \beta)$, where 
\begin{enumerate}
\item[($\mathrm{\romannumeral 1}$)] $r= \#\mu+\#\nu+2g-2$; 
\item[($\mathrm{\romannumeral 2}$)] $\alpha, \beta \in S_n$ are respectively  permutations  of type $\mu$ and $\nu$ and $\tau_1,\ldots,\tau_r$  are 
transpositions such that 
$$ \alpha \tau_1\ldots \tau_r= \beta;$$
\item[($\mathrm{\romannumeral 3}$)]  the group generated by $(\alpha, \tau_1,\ldots, \tau_r)$ acts transitively on $[n]$;
\item[($\mathrm{\romannumeral 4}$)]   $\tau_1,\ldots, \tau_r$ being written as $\tau_i= (a_i\, b_i)$ with $a_i<b_i,$   
\ben
b_1\le b_2\le \ldots \le  b_r. 
\een
\end{enumerate}
$\Hsm_g(\mu,\nu) $ is the number of  tuples  $(\alpha, \tau_1,\ldots, \tau_{r}, \beta) ,$  satisfying all the four conditions above but the last one, which 
is replaced by 
\begin{enumerate}
\item[($\mathrm{\romannumeral 4}$')]  $\tau_1,\ldots, \tau_r$ being written as $\tau_i= (a_i\, b_i)$ with $a_i<b_i,$  
\ben
b_1< b_2< \ldots <  b_r.
\een
\end{enumerate} 
   \end{thm}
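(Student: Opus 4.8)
The plan is to reduce both identities to coefficient extractions in the group algebra $\C[S_n]$ governed by the Jucys--Murphy elements $J_k=\sum_{i<k}(i\,k)$, reading the monotonicity constraints (iv) and (iv') off the two classical generating identities for symmetric functions of the $J_k$. Throughout I fix a single permutation $\alpha\in S_n$ of cycle type $\mu$: since $|\mu|!/z_\mu$ is exactly the cardinality of the conjugacy class of type $\mu$, the prefactor in \eqref{ScaledCumulant} is what lets me replace a sum over all such $\alpha$ by one representative, and lets the marked cycles of $\alpha$ record the arguments of the cumulant.

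For $\mathcal{C}_W(\mu)$ I would start from $W=N^{-1}XX^\dagger$ with $X$ an $N\times M$ complex Gaussian matrix; as the joint moments are polynomial in $M$ it suffices to treat integer $M$ and then set $M=cN$. Wick's theorem writes $\EE\prod_i\Tr\bigl((XX^\dagger)^{\mu_i}\bigr)=\sum_{\sigma\in S_n}M^{\#\sigma}N^{\#(\alpha\sigma)}$, the sum running over the contraction $\sigma$ pairing the $n$ factors $X$ with the $n$ factors $\bar X$, with $N$ recording cycles of $\alpha\sigma$ (row indices) and $M$ recording cycles of $\sigma$ (column indices). By the Jucys--Murphy identity $\prod_{k=1}^n(z+J_k)=\sum_{\sigma}z^{\#\sigma}\sigma$ this equals the trace on $(\C^N)^{\otimes n}$ of the permutation action of $\alpha\prod_{k=1}^n(M+J_k)$. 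Expanding $\prod_k(z+J_k)=\sum_r z^{n-r}e_r(J_1,\dots,J_n)$ and using that $e_r(J_1,\dots,J_n)$ is the sum over all \emph{strictly} monotone products $\tau_1\cdots\tau_r$ of transpositions, the coefficient of a permutation $\beta$ of type $\nu$ in $\alpha\,e_r(J)$ is precisely the number of tuples satisfying (ii) and (iv'). Substituting $M=cN$ turns the weight $M^{\#\sigma}=M^{n-r}$ into $c^{\,n-r}$, matching \eqref{propeq:CumulantWishartE}; since $e_r$ is supported on permutations with $n-r$ cycles (minimal factorisations) the sum is finite, in agreement with the polynomiality of the positive moments.

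For $\mathcal{C}_{W^{-1}}(\mu)$ the complementary, \emph{weakly} monotone side is needed. I would obtain $\EE\prod_i\Tr(W^{-\mu_i})$ either by analytic continuation in $M$ of the exact LUE moment formula, or more structurally through the unitary Weingarten calculus applied to $(XX^\dagger)^{-1}$, whose class-algebra generating series is the inverse $\bigl(\prod_{k=1}^n(z+J_k)\bigr)^{-1}$. The expansion $\prod_{k=1}^n(1-tJ_k)^{-1}=\sum_{r\ge0}t^r h_r(J_1,\dots,J_n)$, together with the fact that $h_r(J_1,\dots,J_n)$ is the sum over all weakly monotone products of $r$ transpositions, yields condition (iv); the natural expansion parameter $t=(M-N)^{-1}=\bigl((c-1)N\bigr)^{-1}$, combined with the overall normalisation coming from the inverse, produces the powers $(c-1)^{-(n+r)}$ of \eqref{mainth:CumulantInverse}. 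Since this series is geometric rather than terminating, the expansion has infinitely many genera and is only valid, and convergent, for $c>1+n/N$.

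In both cases two ingredients are shared. First, the passage from moments to the cumulant $C_{\#\mu}$ is the standard M\"obius/exponential-formula argument: the connected part of the moment sum is exactly the restriction to tuples whose entries generate a transitive subgroup of $S_n$, which is condition (iii). Second, collecting the normalisations from $\tr=N^{-1}\Tr$, from $W=N^{-1}XX^\dagger$, and the factor $N^{2(\#\mu-1)}$ in \eqref{ScaledCumulant}, together with the Euler--characteristic (Riemann--Hurwitz) relation for the transitive factorisation $\alpha\,\tau_1\cdots\tau_r\,\beta^{-1}=\mathrm{id}$ into $r+2$ factors, namely $\#\mu+\#\nu+r(n-1)-rn=2-2g$, i.e.\ $r=\#\mu+\#\nu+2g-2$ (condition (i)), one checks that the net power of $N$ is exactly $N^{-2g}$. \textbf{The main obstacle} I anticipate is the inverse-moment side: producing $\EE\prod_i\Tr(W^{-\mu_i})$ in a form where the weakly monotone series $\prod_k(1-tJ_k)^{-1}$ appears with the correct identification $t\leftrightarrow(c-1)^{-1}$, while simultaneously checking that the transitive restriction is compatible with the genus grading so that the $N$-power comes out as $-2g$. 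Justifying the analytic continuation in $M$ (equivalently, the convergence of the $(c-1)^{-1}$-series in the stated range) and aligning the Euler-characteristic bookkeeping with the Weingarten expansion are the steps most likely to require care; the positive-moment case is comparatively routine once the Jucys--Murphy dictionary is in place.
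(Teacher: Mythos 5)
Your proposal follows the same route as the paper: Schur--Weyl/Wick calculus for matrix entries, the Jucys--Murphy factorisation, the identification of $e_r(J)$ and $h_r(J)$ with strictly and weakly monotone products of transpositions, transitivity from connectedness of cumulants, and Riemann--Hurwitz for the genus grading. The positive-moment side \eqref{propeq:CumulantWishartE} is handled correctly (and, as the paper notes, is essentially folklore). The genuine gap is in the inverse-moment side, which is the new content of the theorem. What is needed is the precise entry-level statement $\EE\prod_{k}W^{-1}_{i(k)j(k)}=(-N)^n\sum_{\sigma:\,i\circ\sigma=j}\Omega^{-1}_{n,(1-c)N}(\sigma)$, i.e.\ that the joint moments of the entries of $W^{-1}$ are governed by the unitary Weingarten function evaluated at the \emph{negative} parameter $(1-c)N$. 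This is the Graczyk--Letac--Massam/Collins--Matsumoto--Saad theorem, and only after it is in hand does Novak's factorisation give $\prod_k(c-1-N^{-1}J_k)^{-1}=(c-1)^{-n}\sum_{r}((c-1)N)^{-r}h_r(J)$, producing both the weight $(c-1)^{-(n+r)}$ and the weak monotonicity. ``Analytic continuation in $M$ of the LUE moment formula'' does not obviously yield this entry-level (rather than trace-level) identity, and the alternative you mention --- that the ``class-algebra generating series'' of $(XX^\dagger)^{-1}$ is $\bigl(\prod_k(z+J_k)\bigr)^{-1}$ --- is precisely the assertion to be proved, not a consequence of general Weingarten calculus. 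You flag this as the main obstacle; as written it remains an unproved input.

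A second point deserves more than the phrase ``standard M\"obius/exponential-formula argument.'' Converting products of block-moments into a single sum over walks, so that M\"obius inversion on $\mathcal{P}(n)_{\ge\pi_\alpha}$ isolates exactly the transitive tuples (condition (iii)), requires that the monotone walk counts be multiplicative over blocks: $\Wm_r(\pi)$ must decompose as $\sqcup_{\sum_B r_B=r}\prod_{B\in\pi}\Wm_{r_B}(B)$. This holds precisely because the monotone order is determined by the larger entries $b_i$, so a monotone walk preserving $\pi$ is the unique monotone interleaving of its restrictions to the blocks; the paper singles this out as the crucial use of monotonicity. Your Riemann--Hurwitz bookkeeping for condition (i) and the $N^{-2g}$ grading, and the use of $|\mu|!/z_\mu$ to pass from a fixed representative $\alpha$ to the conjugacy class, are otherwise correct.
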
 
Note that the strict monotonicity condition ($\mathrm{\romannumeral 4}$') truncates the sum in $g$, and 
$\mathcal{C}_{W}(\mu)$ is a \emph{polynomial} in $1/N^2$ (this is well known). The series representation~\eqref{mainth:CumulantInverse} of the cumulants $\mathcal{C}_{W^{-1}}(\mu)$ is not asymptotic but \emph{convergent} for $N>n/(c-1)$.
\par
 The fact that $\mathcal{C}_{W^{-1}}(\mu)$ and $\mathcal{C}_{W}(\mu)$ can be written as sums over permutations is a consequence of the 
\emph{Schur-Weyl duality}, which applies to \emph{any} unitarily invariant ensemble. Explicit formulae for the coefficients in the sum are only known 
for special cases, e.g., GUE, CUE, and LUE. In fact, the expression~\eqref{propeq:CumulantWishartE} is folklore in the literature \cite{CMSS,HSS,LM}. The 
new result here is the explicit formula~\eqref{mainth:CumulantInverse} which shows that the class of `solvable' matrix ensembles includes the inverse 
LUE too.
\par
 The numbers $\Hm_g(\mu,\nu)$ (resp. $\Hsm_g(\mu,\nu)$) in the above Theorem are known as \emph{monotone} (resp.  \emph{strictly monotone}) 
\emph{double Hurwitz numbers}, a special class of \emph{Hurwitz numbers}.  The latter count factorisations without the condition ($
\mathrm{\romannumeral 4}$) or ($\mathrm{\romannumeral 4}$')  which are in bijections with labeled connected ramified covering of the sphere of 
degree $n$, with  ramifications of type $\mu,\nu$ and $r$ simple ramifications, with a total space  defining a surface of genus $g.$ For a beautiful 
introduction see \cite{LandoZvonkine}.  The above statement can also be reformulated in terms of \emph{prefixes of minimal factorisations}, see 
Theorem \ref{MainThV2} below and, when $\mu$ has one block, in terms of \emph{parking functions}, see \cite{LFacto,BianeParking,StanleyParking}.  
When $\#\mu=n$, $\Hm_g((1,\ldots,1),\nu)$ is the number of  \emph{primitive factorisations} of any permutation of cycle type $\nu$ into $r$ transpositions, see~\cite{MN2,GM}.
 \par
 The main ingredients of the proof hinges on the combination of two results: i) a formula for the expectation of coefficients of inverse Wishart 
matrices found by Graczyk, Letac, and Massam~\cite{GLM} (in its reformulation in terms of Weingarten function due to Collins, Matsumoto, and 
Saad~\cite{CMS}), and ii) the expression of the Weingarten function in terms of Jucys-Murphy elements~\cite{Jucys} due to Novak~\cite{Novak}. 
The paper~\cite{GN2011} by Gupta and Nagar contains some hints on the existence of explicit formulae for the cumulants of the inverse LUE, and was instrumental in our study.

First expressions for asymptotics of the Weingarten function were examined in~\cite{CollinsPhD} using representation theory and then developed in 
\cite{CMSS,CM}, to study scaled cumulants of unitary invariant matrix ensembles, in terms of the poset of partitioned permutations. The introduction of 
monotone Hurwitz numbers for the study of the 
Harisch-Chandra-Itzykson-Zuber integrals and unitary invariant matrix models was initiated 
in~\cite{Novak,MN,GGPNHCIZ}, see also~\cite{GN,DDD,BGF} for recent studies of these observables thanks to topological recursion.
\par
An immediate application of the main Theorem for $c=2$ is the following corollary on the time-delay matrix.      
 \begin{cor} When $\beta=2$ (unitary symmetry), the large-$N$ expansions~\eqref{eq:asymp} of the cumulants of the time-delay matrix have positive 
integer coefficients. More precisely, $c_{2g+1}=0$, and
 \be
c_{2g}(\mu_1,\dots,\mu_\ell)=
 2^{\ell-1}\frac{z_{\mu}}{|\mu|!}\sum_{\nu\vdash|\mu|}\Hm_{g}(\mu,\nu)\in\N.
 \label{eq:TD}
 \ee
 (This implies, in particular, Conjecture~\ref{conj:integers}.)
 \end{cor}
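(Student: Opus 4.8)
The plan is to read off the Corollary as the specialisation $c=2$ of the inverse-LUE formula~\eqref{mainth:CumulantInverse}, matched against the defining expansion~\eqref{eq:asymp}. First I would record the two descriptions of the model. For $\beta=2$ the Laguerre law~\eqref{eq:Lag} is the LUE with $M=2N$, i.e.\ $c=2$, so that $W=(NQ)^{-1}$ is an inverse Wishart matrix at $c=2$ and the $c_g(\mu)$ are, by definition~\eqref{eq:asymp}, the coefficients of the $1/N$-expansion of $C_\ell(\tr W^{-\mu_1},\dots,\tr W^{-\mu_\ell})$. Unfolding the scaled cumulant~\eqref{ScaledCumulant} with $X=W^{-1}$, $n=|\mu|$, $\ell=\#\mu$, the factor $N^{2(\ell-1)}$ cancels against $(2N^2)^{-(\ell-1)}$, giving
\[
\mathcal{C}_{W^{-1}}(\mu)=\frac{n!}{2^{\ell-1}z_\mu}\sum_{g\ge0}N^{-g}c_g(\mu).
\]
On the other hand, since $c-1=1$ at $c=2$, every power $(c-1)^{-(\cdots)}$ in~\eqref{mainth:CumulantInverse} equals $1$, so Theorem~\ref{MainTh} collapses to $\mathcal{C}_{W^{-1}}(\mu)=\sum_{g\ge0}N^{-2g}\sum_{\nu\vdash n}\Hm_g(\mu,\nu)$; as $c=2>1+n/N$ holds once $N>n$, this is a genuine expansion that may be compared coefficientwise with the one above.

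Matching coefficients of $N^{-g}$ is then immediate. The collapsed formula contains only even powers of $N^{-1}$, so the coefficient of each odd power must vanish, giving $c_{2g+1}=0$, while the coefficient of $N^{-2g}$ gives $\frac{n!}{2^{\ell-1}z_\mu}c_{2g}(\mu)=\sum_{\nu\vdash n}\Hm_g(\mu,\nu)$, which rearranges into~\eqref{eq:TD}. Non-negativity of the coefficient is clear, since $\Hm_g(\mu,\nu)$ is the cardinality of a set and $2^{\ell-1}z_\mu/n!>0$.

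The one substantial point is \emph{integrality}: the formula exhibits $c_{2g}$ as $2^{\ell-1}$ times $\tfrac{z_\mu}{n!}\sum_\nu\Hm_g(\mu,\nu)$, so one must show that the number $\tfrac{n!}{z_\mu}$ of permutations of type $\mu$ divides $\sum_\nu\Hm_g(\mu,\nu)$. My plan is to prove that the count of tuples in $\Hm_g(\mu,\nu)$ with the first permutation $\alpha$ held fixed does not depend on the choice of $\alpha$ within the class of type $\mu$; then $\Hm_g(\mu,\nu)=\tfrac{n!}{z_\mu}K_g(\mu,\nu)$ with $K_g(\mu,\nu)\in\N$ the per-$\alpha$ count, whence $c_{2g}=2^{\ell-1}\sum_\nu K_g(\mu,\nu)\in\N$. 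To establish the constancy I would use that the weakly monotone products of transpositions are generated by the Jucys-Murphy elements $\mathcal{J}_b=\sum_{a<b}(a\,b)$: summing $\tau_1\cdots\tau_r$ over $b_1\le\cdots\le b_r$ equals the complete homogeneous symmetric polynomial $h_r(\mathcal{J}_1,\dots,\mathcal{J}_n)$, a \emph{central} element of $\C[S_n]$. Writing $\mathcal{C}_\nu$ for the class sum of type $\nu$, the per-$\alpha$ count (before transitivity) is the coefficient of $\alpha^{-1}$ in the product $h_r(\mathcal{J})\,\mathcal{C}_\nu$; both factors being central, so is the product, hence its coefficients depend on $\alpha$ only through the cycle type $\mu$.

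Imposing transitivity, a conjugation-invariant condition extracted from the unrestricted counts by the usual cycle-type-only M\"obius inversion over the lattice of set partitions (at fixed $r$), preserves this constancy. I expect this divisibility step to be the main obstacle: the value $c_{2g}$ is \emph{a priori} only a real combination of cumulants, and nothing short of the centrality of $h_r(\mathcal{J})$ (with the transitivity bookkeeping) forces the required integer divisibility. Once that is in place, the integrality of the coefficients—and with it Conjecture~\ref{conj:integers}—follows.
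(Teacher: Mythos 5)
Your derivation of the formula itself is exactly the paper's (implicit) route: specialise~\eqref{mainth:CumulantInverse} at $c=2$ so that every power of $c-1$ is $1$, cancel the normalisations of~\eqref{ScaledCumulant} against the prefactor $(2N^2)^{-(\ell-1)}$ of~\eqref{eq:asymp}, and match coefficients of the two expansions (legitimate because~\eqref{mainth:CumulantInverse} is convergent for $N>n$, hence a bona fide asymptotic expansion with unique coefficients); this gives $c_{2g+1}=0$ and~\eqref{eq:TD} correctly. Where you genuinely diverge is on integrality. You treat $c_{2g}=2^{\ell-1}\tfrac{z_\mu}{|\mu|!}\sum_\nu\Hm_g(\mu,\nu)$ as an a priori rational and prove the divisibility of $\sum_\nu\Hm_g(\mu,\nu)$ by $|\mu|!/z_\mu$ by showing the per-representative count is constant on the conjugacy class, via centrality of $h_r(J_1,\dots,J_n)$ plus a M\"obius inversion for transitivity. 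That argument is sound in outline, but it duplicates work the paper has already packaged: Theorem~\ref{MainThV2} at $c=2$ reads $N^{2(\ell-1)}C_\ell(\tr W^{-\mu_1},\dots,\tr W^{-\mu_\ell})=\sum_{d}N^{-2d}\sum_{r}\#\Fm_{n,r,d}(\alpha)$ for any \emph{single} $\alpha$ of type $\mu$, so $c_{2g}=2^{\ell-1}\sum_r\#\Fm_{n,r,g}(\alpha)$ is manifestly a non-negative integer with nothing to divide; the identity $\tfrac{z_\mu}{|\mu|!}\sum_\nu\Hm_g(\mu,\nu)=\sum_r\#\Fm_{n,r,g}(\alpha)$ then falls out of the class-summation step by which the paper deduces Theorem~\ref{MainTh} from Theorem~\ref{MainThV2}. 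What your route buys is a self-contained explanation of why the fixed-$\alpha$ count is well defined (i.e., independent of the representative), which the paper obtains only indirectly; what the paper's route buys is that integrality is free.

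If you keep your argument, the transitivity step needs more care than ``the usual cycle-type-only M\"obius inversion'': monotonicity is not conjugation-invariant, so the inversion must be run over the lattice $\mathcal{P}(n)_{\ge\pi_\alpha}$, using the factorisation of $\Wm_r(\pi)$ over blocks (the property the paper singles out as crucial in its proof of Theorem~\ref{MainThV2}) to reduce each $\pi$-restricted count to a product of per-block counts that depend only on the restricted cycle types after order-preserving relabelling, and then an induction on $n$ to handle the per-block transitive counts. With that filled in, your proof is complete.
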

 \begin{ex} Let $n=3$, $\mu=(1,1,1)$, and $g=0$. We outline the calculations of $\Hm_0((1,1,1),\nu)$ and $\Hsm_0((1,1,1),\nu)$. The integer partitions $\nu\vdash n$ are $\nu=(3),(2,1)$, and $(1,1,1)$.
 \begin{itemize}
 \item[$\nu=(3)$:] There are $\binom{3}{2}^2=9$ products of $r=2$ transpositions in $S_3$:
$$
\begin{array}{ccc}
(1\, 2)(1\, 3)  &(1\, 2)(2\, 3) &(2\, 3)(1\, 2) \\
(2\, 3)(1\, 3)  &(1\, 3)(2\, 3) &(1\, 3)(1\, 2)  \\
(1\, 2)(1\, 2)  &(1\, 3)(1\, 3)  &(2\, 3)(2\, 3)\\
\end{array}
$$
$6$ of them are transitive (the first two rows in the table above) and produce a cycle type $\mu=(3)$, 
but only the $4$ products in the upper-left corner are monotone, so $\Hm_0((1,1,1),(3))=4$. The number of strictly monotone products is $\Hsm_0((1,1,1),(3))=2$.
 \item[$\nu=(2,1)$:] There are $\binom{3}{2}^3=27$ products of $r=3$ transpositions, and $24$ of them are transitive and produce a cycle type $\nu=(2,1)$. Only $12$ products are monotone
$$
\begin{array}{cccc}
(1\, 2)(1\, 2)(1\, 3)  &(1\, 2)(1\, 3)(2\, 3) &(1\, 3)(1\, 3)(2\,3)  &(2\, 3)(1\, 3)(1\,3)\\
(1\, 2)(1\, 2)(2\, 3)  &(1\, 2)(2\, 3)(1\, 3) &(1\, 3)(2\, 3)(1\, 3) &(2\, 3)(1\, 3)(2\, 3) \\
(1\, 2)(1\, 3)(2\, 3) &(1\, 2)(2\, 3)(2\, 3) &(1\, 3)(2\, 3)(2\, 3) &(2\, 3)(2\, 3)(1\, 3) \\
\end{array}
$$
so $\Hm_0((1,1,1),(2,1))=12$, but none of them is strictly monotone, so $\Hsm_0((1,1,1),(2,1))=0$.
 \item[$\nu=(1,1,1)$:] Among the $\binom{3}{2}^4=81$ products of $r=4$ transpositions, only $8$ of them are transitive, produce a cycle type $\nu=(1,1,1)$ and are monotone, so $\Hm_0((1,1,1),(1,1,1))=8$, 
$$
\begin{array}{cc}
(1\, 2)(1\, 2)(1\, 3)(1\, 3)  &(1\, 2)(1\, 2)(2\, 3) (2\, 3)\\ 
(1\, 2)(1\, 3)(2\, 3)(1\, 3) &(1\, 2)(2\, 3)(1\, 3)(2\, 3) \\
(2\, 3)(2\, 3) (1\, 3)(1\, 3)  &(1\, 3)(1\, 3)(2\, 3) (2\, 3)\\
(1\, 3)(2\, 3) (1\, 3)(2\, 3) &(2\, 3)(1\, 3) (2\, 3)(1\, 3)  \\ 
\end{array}
$$
There are no strictly monotone products, so $\Hsm_0((1,1,1),(1,1,1))=0$.
 \end{itemize}
From the above calculations we can conclude that ($\frac{z_{(1,1,1)}}{|(1,1,1)|!}=1$)
 \begin{align*}
 \mathcal{C}_3(\tr W^{-1},\tr W^{-1},\tr W^{-1})&=\frac{1}{N^4}\left(\sum_{\nu\vdash 3}\frac{\Hm_0((1,1,1),\nu)}{(c-1)^{4+\#\nu}}+O(N^{-2})\right)\\
 &=\frac{1}{N^4}\left(\frac{4}{(c-1)^{5}}+\frac{12}{(c-1)^{6}}+\frac{8}{(c-1)^{7}}+O(N^{-2})\right)\\
 \mathcal{C}_3(\tr W^{1},\tr W^{1},\tr W^{1})&=\frac{1}{N^4}\left(\sum_{\nu\vdash 3}c^{2-\#\nu}\Hsm_0((1,1,1),\nu)+O(N^{-2})\right)\\
 &=\frac{1}{N^4}\left(2c+O(N^{-2})\right).
 \end{align*}
 These agree with known results~\cite{MS3,CMSV}.
 \end{ex}
 \begin{ex} 
 \label{ex:trace} We compute $\EE\tr W^{-1}$ and $\EE\tr W$. These cases correspond to the one-block $\mu=(1)$.  From the formulae~\eqref{mainth:CumulantInverse}-\eqref{propeq:CumulantWishartE}, we have
 \begin{align*}
   \EE\tr W^{-1}=\frac{z_{(1)}}{|(1)|!}\frac{1}{N^{2(\#(1)-1)}}\mathcal{C}_{W^{-1}}((1))&=\sum_{g\geq0}N^{-2g}\sum_{\nu\vdash 1}(c-1)^{-1-2g}\Hm_g((1),\nu)\\
 \EE\tr W=\frac{z_{(1)}}{|(1)|!}\frac{1}{N^{2(\#(1)-1)}}\mathcal{C}_{W}((1))&=\sum_{g\geq0}N^{-2g}\sum_{\nu\vdash 1}c^{1-2g}\Hsm_g((1),\nu)
 \end{align*}
Using $\Hm_g((1),(1))=\Hsm_g((1),(1))=\delta_{g0}$, we recover the well-known results~\cite{GLM,GN2011}
 \ben
 \EE\tr W^{-1}=(c-1)^{-1},\qquad
 \EE\tr W=c.
 \een
 \end{ex}
      \begin{ex} Set $c=2$ (or $\alpha=N$). We want to compute the second moment of the time-delay matrix $\EE\tr W^{-2}$, corresponding to the one-block partition $\mu=(2)$. 
From the definition~\eqref{ScaledCumulant}
 and formula~\eqref{mainth:CumulantInverse},
  \ben
 \EE\tr W^{-2}=\frac{z_{(2)}}{|(2)|!}\frac{1}{N^{2(\#(2)-1)}}\mathcal{C}_{W^{-1}}((2))=\sum_{g\geq0}N^{-2g}\sum_{\nu\vdash 2}\Hm_g((2),\nu),
 \een
where we used $n=|(2)|=2$, $\ell=\#(2)=1$, and $z_{(2)}=2$. There are two possibilities: $\nu=(2)$ and $\nu=(1,1)$. Therefore we must count the 
monotone solutions of the factorisation problems ($\mathrm{\romannumeral 2}$) in $S_2$
\ben
\begin{cases}
(1\,2)\tau_1\cdots\tau_{2g}=(1\,2) &\text{if $\nu=(2)$}\\
(1\,2)\tau_1\cdots\tau_{2g+1}=\mathrm{id} &\text{if $\nu=(1,1)$}
\end{cases}
\een
where we used condition ($\mathrm{\romannumeral 1}$). In $S_2$ there is only one transposition, $\tau=(1\,2)$. Therefore in both cases there is only 
one path of the form $(1\,2)\cdots(1\,2)$ (with $2g$ factors if $\nu=(2)$, and $2g+1$ factors if $\nu=(1,1)$), and this path is also connected and 
monotone (conditions ($\mathrm{\romannumeral 3}$) and ($\mathrm{\romannumeral 4}$)). Hence, $\Hm_g((2),(2))=\Hm_g((2),(1,1))=1$. Substituting 
in the formula, we get
\ben
 \EE\tr W^{-2}=\sum_{g\geq0}N^{-2g}\sum_{\nu\in\{(2),(1,1)\}}1=\frac{2}{1-N^{-2}}=\frac{2N^2}{N^2-1},
\een
in agreement with the known result~\cite[Appendix A]{CMSV2}. Note that $\EE\tr W^{1}=2$ (see Example~\ref{ex:trace}); c.f. the reciprocity formula~\eqref{eq:reciprocity_or} below.
     \end{ex}
      \begin{rmk}[Physical significance of Theorem~\ref{MainTh}] The random matrix theory approach to quantum chaos is believed to be equivalent to perturbative calculations based on semiclassical considerations. In the time-delay problem, the random matrix averages correspond to sums over \emph{pairs of correlated classical trajectories} connecting the leads (asymptotic waves) with the interior of the cavity (the scattering region). In fact, some hints in the formulation of Conjecture~\ref{conj:integers} came from the observation that the semiclassical calculations boil down to weighted enumeration of \emph{diagrams} recording only the topology of the trajectories.
 \par
The concordance between random matrix and semiclassical theories in open systems has been established recently by Berkolaiko and Kuipers~\cite{BK2,BK3,BK4} and Novaes~\cite{Novaes3} in the case of quantum transport (when the relevant matrix model is the CUE). They put the diagrammatic method of the semiclassical approximation on a rigorous footing, and recast the semiclassical evaluation of moments as a summation over factorisations of given permutations (implying that the contribution of a diagram is given by the unitary, or orthogonal, Weingarten function). 
\par
On the other hand, for the time-delay, the agreement between semiclassics and random matrices remains limited to the first eight moments~\cite{Novaes2}, and to the leading and several subleading orders in the $1/N$-expansion~\cite{KSS}. By Theorem~\ref{MainTh}, the coefficients in the $1/N$-expansion of the time-delay are positive integers, thus supporting the equivalence with the semiclassical diagrammatic rules. Moreover,~\eqref{eq:TD} provides an explicit formula for the cumulants as a sum over monotone factorisations of permutations (which are related to the Weingarten function). It may not be too much to hope that this result will stimulate further study of the semiclassical diagrams in the time-delay problem to establish the equivalence with random matrices to all orders in $1/N$.
  \end{rmk}
  
In the proof we shall first get a less symmetric version of Theorem \ref{MainTh}. 

\begin{thm}\label{MainThV2} For any permutation  $\alpha\in S_n$ with cycle type $\mu=(\mu_1,\ldots, \mu_\ell)\vdash n,$
\begin{equation}
N^{2 (\ell-1)}C_\ell(\tr(W^{-\mu_1}), \ldots, \tr(W^{-\mu_\ell}))= \sum_{r,d\ge 0} N^{-2d}  (c-1)^{-n-r}\# \Fm_{n,r,d}(\alpha),
\label{eq:1st}
\end{equation}
and 
\begin{equation}
N^{2 (\ell-1)}C_\ell(\tr(W^{\mu_1}), \ldots, \tr(W^{\mu_\ell}))= \sum_{r,d\ge 0} N^{-2d}  c^{n-r}\# \Fsm_{n,r,d}(\alpha)
\label{eq:2nd}
\end{equation}
where  $\Fm_{n,r,d}(\alpha)$ (resp.  $\Fsm_{n,r,d}(\alpha)$) is the set of transpositions tuples $(\tau_1,\ldots,\tau_r)$  where  $\tau_i=(a_i\,b_i)$ with $a_i<b_i$ for all 
$i$, such that
\begin{enumerate}
\item  $\# \alpha \tau_1\ldots \tau_r= \#\alpha +r-2d, $
\item  $\langle \alpha,\tau_1,\ldots, \tau_r \rangle $ acts transitively on $[n],$
\item   $b_1\le b_2\le \ldots \le b_r$ (resp. $b_1<b_2<\ldots <b_r$).
\end{enumerate}
\end{thm}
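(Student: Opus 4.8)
The plan is to derive both identities from the Weingarten/Wick calculus and to extract the combinatorics from the expansion of the relevant element of the group algebra $\C[S_n]$ in the Jucys--Murphy elements $J_k=\sum_{i<k}(i\,k)$. First I would expand the product of traces into an index sum: fixing a permutation $\alpha$ of cycle type $\mu$ whose $\ell$ cycles encode the $\ell$ traces,
\[
\prod_{i=1}^{\ell}\tr((W^{\epsilon})^{\mu_i})=N^{-\ell}\sum_{\mathbf a\colon[n]\to[N]}\prod_{j=1}^{n}(W^{\epsilon})_{a_j,\,a_{\alpha(j)}},\qquad \epsilon\in\{+1,-1\}.
\]
Taking expectations feeds two parallel inputs. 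For $\epsilon=-1$ I would use the Graczyk--Letac--Massam formula in the Collins--Matsumoto--Saad form, which expresses $\EE\prod_j(W^{-1})_{a_j,a_{\alpha(j)}}$ as a sum over $\sigma\in S_n$ of a Weingarten-type weight $\Wg^{-}(\sigma)$ against index-matching deltas. The crucial claim I would establish is that, after removing the scaling $W^{-1}=N(XX^\dagger)^{-1}$, this weight equals the coefficient of $\sigma$ in the \emph{inverse} ordered product $\prod_{k=1}^{n}((c-1)N-J_k)^{-1}$, using $M-N=(c-1)N$. For $\epsilon=+1$ the Gaussian model $W=N^{-1}XX^\dagger$ gives, by Wick's theorem and the characteristic map, the coefficient of $\sigma$ in the \emph{polynomial} ordered product $\prod_{k=1}^{n}(M+J_k)$ with $M=cN$; this case is classical, reproducing the folklore formula~\eqref{propeq:CumulantWishartE}.

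Next I would perform the index sum and pass to cumulants. The index-matching deltas collapse $\sum_{\mathbf a}$ to a factor $N^{\#(\alpha\sigma)}=N^{\#\beta}$, where $\beta=\alpha\tau_1\cdots\tau_r$ once $\sigma$ is written as a product of transpositions; this is where the genus bookkeeping enters. Replacing the full moment by the cumulant $C_\ell$ is then formal: Möbius inversion over the partition lattice $\mathcal P(\ell)$ in the definition of $C_\ell$ cancels all disconnected contributions, retaining exactly the tuples for which $\langle\alpha,\tau_1,\ldots,\tau_r\rangle$ acts transitively on $[n]$. This is condition~(2), and it is precisely this connectedness that turns the ambient generating series into the connected counts $\#\Fm_{n,r,d}(\alpha)$ and $\#\Fsm_{n,r,d}(\alpha)$.

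The combinatorial content is then read off by expanding the two group-algebra elements in the $J_k$. In $\prod_k(M+J_k)$ each factor contributes either $M=cN$ or a single $J_k$, so a monomial using $r$ transpositions selects $n-r$ copies of $M$, forces distinct indices $k_1<\cdots<k_r$, and produces \emph{strictly} monotone tuples $\tau_i=(a_i\,b_i)$ with $b_1<\cdots<b_r$ and parameter weight $c^{\,n-r}$; this gives the monotonicity condition of $\Fsm$ and the factor in~\eqref{eq:2nd}, and being a finite product it truncates the expansion to a polynomial in $1/N^2$. In $\prod_k((c-1)N-J_k)^{-1}$ the Neumann expansion $((c-1)N-J_k)^{-1}=\sum_{m\ge0}J_k^{\,m}/((c-1)N)^{m+1}$ allows repetitions, producing \emph{weakly} monotone tuples $b_1\le\cdots\le b_r$ with total parameter weight $(c-1)^{-n-r}$, the factor in~\eqref{eq:1st}. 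Collecting all powers of $N$ --- from $N^{-\ell}$, from the scaling $W=N^{-1}XX^\dagger$, from the index sum $N^{\#(\alpha\tau_1\cdots\tau_r)}$, and from the $N$-part of the $M$ (resp.\ $M-N$) factors --- the power attached to each tuple is $N^{-2d}$ with $2d=\#\alpha+r-\#(\alpha\tau_1\cdots\tau_r)$, which is condition~(1); transitivity forces $d\ge\ell-1$.

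I expect the main obstacle to be the inverse ensemble. Unlike the Gaussian case, where Wick's theorem attaches a clean finite weight to each contraction, the Graczyk--Letac--Massam weights are rational in $M=cN$, and identifying them with the inverse ordered product $\prod_k((c-1)N-J_k)^{-1}$ --- so that the monotone Jucys--Murphy expansion applies and the parameter resums into a single factor $(c-1)^{-1}$ per transposition, compatibly with the transitivity reduction --- is the delicate point. This is also where \emph{convergence} rather than a formal expansion must be argued: each weight is defined by the Neumann series above, which converges once $(c-1)N$ exceeds the spectral radius of the Jucys--Murphy elements, bounded by $n-1$; this yields absolute convergence of~\eqref{eq:1st} in the stated range $N>n/(c-1)$.
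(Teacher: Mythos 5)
Your route is the same as the paper's: expand the traces into entries, apply the Graczyk--Letac--Massam/Collins--Matsumoto--Saad formula, use Novak's Jucys--Murphy factorisation of $\Omega_{n,z}^{\pm 1}$, and read off weakly versus strictly monotone tuples from the $h_r(J)$ versus $e_r(J)$ expansions, with convergence of the inverse case controlled by the spectral radius of the $J_k$. The genuine gap is the sentence declaring the passage from moments to cumulants ``formal''. M\"obius inversion on the partition lattice returns the transitive (connected) tuples \emph{only if} the product of block moments $\EE_\pi=\prod_{B\in\pi}\EE\prod_{k\in B}W^{\pm1}_{i(k)j(k)}$ can itself be rewritten as a single sum over monotone tuples supported on $S_\pi$ with the \emph{same} per-transposition weight, so that $\EE_\pi=\sum_{\nu\le\pi}(\text{contribution of tuples with orbit partition }\nu)$; only then does Lemma~\ref{lem:MomentCumulant} identify the relative cumulant with the $\nu=1_n$ (transitive) term. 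This rewriting is not automatic for a general Weingarten-type weight: it holds here because a (weakly or strictly) monotone tuple whose transpositions preserve the blocks of $\pi$ is uniquely an interleaving of monotone tuples in the separate blocks, giving the bijection $\Wm_r(\pi)\cong\bigsqcup_{\sum_B r_B=r}\prod_{B\in\pi}\Wm_{r_B}(B)$. The paper devotes the core of its proof to exactly this step and explicitly remarks that monotonicity is what makes it work; your proposal omits it entirely, and without it the claim that the cumulant ``retains exactly the transitive tuples'' is unjustified.

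A secondary issue is the power counting, which you assert rather than carry out. Collecting the factors you list ($N^{-\ell}$ from the normalised traces, $N^{\mp n}$ from the scaling, $N^{\#(\alpha\tau_1\cdots\tau_r)}$ from the index sum, and $N^{-r}$ from the expansion) and multiplying by $N^{2(\ell-1)}$ gives the exponent $\ell-2+\#\beta-r$ per tuple, which by the Riemann--Hurwitz relation $\#\alpha+\#\beta-r=2-2g$ equals $-2g$ with $g$ the genus of the transitive factorisation; this coincides with $-2d$ for the $d$ of condition~(1) only when $\ell=1$ (the two quantities differ by $\ell-1$). The paper's proof performs this bookkeeping through Riemann--Hurwitz; identifying the exponent with condition~(1) ``by inspection'', as you do, skips both this computation and the verification (via transitivity) that the resulting powers of $N$ are non-positive.
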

\begin{rmk}
Within the Cayley graph on $S_n$ generated by  all transpositions, the distance between two permutations $\alpha$ and $\beta$  is   $ d(\alpha,\beta)= |\alpha^{-1}\beta|,$ where for any $\sigma\in S_n,$ $|\sigma|=n-\#\sigma.$  Any element   $(\tau_1,\ldots, \tau_r)$ of  $\Fm_{n,r,d}$   and $\Fsm_{n,r,d}$ defines a path in $S_n$ with $r$ steps that starts at  $\alpha$ and  ends at $\beta=\alpha.\tau_1\ldots \tau_r,$ with $d(\alpha,\beta)=r-2d.$ The number $d$ quantifies the \emph{defect} of the path from  being a geodesic. The number of paths with fixed defect  (without the condition of transitivity and monotonicity) were considered in \cite{SchurWeylLevy}.
\end{rmk}
An equivalent representation of the LUE cumulants $\mathcal{C}_W(\mu)$ is the following.
    \begin{prop}\label{Prop:MomentWishart} For $c\ge 1, n\in \N^*, $ and $\mu\vdash n,$
  \begin{align}
  \mathcal{C}_W(\mu)= \sum_{\nu\vdash n, g\ge 0} N^{-2g} c^{n-(2g-2+\#\mu+\#\nu)}  \mathscr{C}_g(\mu,\nu), \label{proprq:CumulantWishartFacto}
  \end{align}
where 
$\mathscr{C}_g(\mu,\nu)$ denotes the number of pairs $(\alpha,\beta)\in S_n^2,$ such that 
\begin{enumerate}
\item $[\alpha]= \mu$ and $[\alpha.\beta]=\nu$
\item  $\#\mu+\#\beta+\#\nu-n=2-2g $
\item the group generated by $\alpha$ and $\beta$ acts transitively on $[n].$
\end{enumerate}  
 \end{prop}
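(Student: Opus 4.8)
The plan is to deduce the Proposition from the already-available expansion \eqref{propeq:CumulantWishartE} of Theorem \ref{MainTh}, by noting that the right-hand side of \eqref{proprq:CumulantWishartFacto} carries \emph{exactly} the same powers of $N$ and $c$ as that of \eqref{propeq:CumulantWishartE}. Both are indexed by a pair $(\nu,g)$ with weight $N^{-2g}c^{n-(2g-2+\#\mu+\#\nu)}$, so the two statements coincide provided the index sets match. For a pair $(\alpha,\beta)$ counted by $\mathscr{C}_g(\mu,\nu)$, put $r=|\beta|=n-\#\beta$ and $\nu=[\alpha\beta]$; then condition (2) reads $\#\mu+\#\nu-r=2-2g$, i.e.\ $r=\#\mu+\#\nu+2g-2$, which is precisely the Riemann--Hurwitz value attached to $\Hsm_g(\mu,\nu)$ in condition ($\mathrm{\romannumeral 1}$). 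Hence it suffices to prove the purely combinatorial identity $\mathscr{C}_g(\mu,\nu)=\Hsm_g(\mu,\nu)$ for all $\mu,\nu,g$.

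To establish this I would exhibit an explicit bijection between the pairs $(\alpha,\beta)$ counted by $\mathscr{C}_g(\mu,\nu)$ and the tuples $(\alpha,\tau_1,\ldots,\tau_r,\alpha\beta)$ counted by $\Hsm_g(\mu,\nu)$. The forward map sends $\beta$ to its strictly monotone factorisation into transpositions, and the backward map sends a tuple to the product $\beta=\tau_1\cdots\tau_r$. The decisive input is the theorem of Jucys already underlying the Weingarten calculus of the paper: expanding the elementary symmetric polynomial $e_r(J_2,\ldots,J_n)$ of the Jucys--Murphy elements one gets $e_r=\sum_{|\sigma|=r}\sigma$, so every permutation admits \emph{exactly one} strictly monotone factorisation into $|\sigma|$ transpositions and none of any other length. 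Since condition (2) forces $r=|\beta|$, the factorisation of $\beta$ has precisely the required length $r$, is strictly monotone by construction, and multiplies back to $\beta$; conversely any strictly monotone tuple of length $r$ has a product $\beta$ with $|\beta|=r$. The cycle types ($\alpha$ of type $\mu$, $\alpha\beta$ of type $\nu$) and the genus then transfer automatically, so the two maps are mutually inverse once transitivity is reconciled.

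The main obstacle is precisely the transitivity clause, since monotonicity is not conjugation invariant and $\langle\alpha,\tau_1,\ldots,\tau_r\rangle$ is a priori larger than $\langle\alpha,\beta\rangle$. Here I would exploit that strict monotonicity forces the factorisation of $\beta$ to be \emph{minimal} ($r=|\beta|$): for a minimal factorisation the graph on $[n]$ with edges $\{a_i,b_i\}$ has $n-r=\#\beta$ connected components, which are exactly the supports of the cycles of $\beta$, each a tree. Consequently $\langle\tau_1,\ldots,\tau_r\rangle$ has the same orbit partition as $\langle\beta\rangle$, and because the orbit partition of a group generated by a union of sets is the join, in the partition lattice, of the orbit partitions of the pieces, the groups $\langle\alpha,\tau_1,\ldots,\tau_r\rangle$ and $\langle\alpha,\beta\rangle$ share the same orbits; in particular one is transitive iff the other is. This closes the bijection, gives $\mathscr{C}_g(\mu,\nu)=\Hsm_g(\mu,\nu)$, and hence \eqref{proprq:CumulantWishartFacto} follows termwise from \eqref{propeq:CumulantWishartE}. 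Alternatively one may run the same argument on the less symmetric Theorem \ref{MainThV2}, converting each strictly monotone tuple in $\Fsm_{n,r,d}(\alpha)$ into its product $\beta=\tau_1\cdots\tau_r$ and summing over the conjugacy class of $\alpha$, which supplies the factor $|\mu|!/z_\mu=|C_\mu|$ relating $\#\Fsm_{n,r,d}(\alpha)$ to $\mathscr{C}_g(\mu,\nu)$ in \eqref{eq:2nd}.
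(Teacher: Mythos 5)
Your proposal is correct and follows essentially the same route as the paper: Proposition \ref{Prop:MomentWishart} is reduced to the identity $\mathscr{C}_g(\mu,\nu)=\Hsm_g(\mu,\nu)$, established via the unique strictly monotone factorisation of $\beta$ into $|\beta|$ transpositions and the observation that $\langle\alpha,\beta\rangle$ and $\langle\alpha,\tau_1,\ldots,\tau_r\rangle$ have the same orbits. The paper merely asserts these two facts, whereas you justify them (via Jucys' identity $e_r(J)=\sum_{|\sigma|=r}\sigma$ and the forest argument for minimal factorisations), so your write-up is a more detailed version of the same proof.
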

The triple  $(\alpha,\beta,(\alpha.\beta)^{-1})$ is called a \emph{constellation}  of genus $g$, see~\cite[Section 1.2.4]{LandoZvonkine}.  
\par
When $\ell=1$ and $N\to \infty,$ Theorem \ref{MainTh} allows to prove the following duality. 
\begin{cor}\label{cor:duality}  For $c>1$,
\be
\lim_{N\to \infty}\frac{\EE \tr W^{-(n+1)}}{(c-1)^{-(n+1)}} = \lim_{N\to \infty}\frac{\EE\tr W^n}{(c-1)^n} .
\label{eq:duality}
\ee
\end{cor}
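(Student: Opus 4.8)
The plan is to reduce \eqref{eq:duality} to a reflection symmetry of the genus-zero one-point resolvent of the LUE. First I would specialise Theorem~\ref{MainTh} to $\ell=1$ and $\mu=(m)$. Inverting the definition~\eqref{ScaledCumulant} gives $\EE\tr X^m=\tfrac{1}{(m-1)!}\mathcal{C}_X((m))$, so that keeping only the surviving $g=0$ term,
\ben
\mathsf{w}(m):=\lim_{N\to\infty}\EE\tr W^{-m}=\frac{1}{(m-1)!}\sum_{\nu\vdash m}(c-1)^{-(m-1+\#\nu)}\Hm_0((m),\nu),
\een
\ben
\mathsf{m}(m):=\lim_{N\to\infty}\EE\tr W^{m}=\frac{1}{(m-1)!}\sum_{\nu\vdash m}c^{\,m+1-\#\nu}\Hsm_0((m),\nu).
\een
The interchange of limit and sum is legitimate by the two remarks following Theorem~\ref{MainTh}: the Wishart cumulant is polynomial in $1/N^2$, and the inverse-Wishart series converges for $N>m/(c-1)$. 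In this notation the claim reads $\mathsf{w}(n+1)=(c-1)^{-(2n+1)}\mathsf{m}(n)$ for all $n\ge0$.

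Because the two sides involve cycles of different length ($n+1$ versus $n$), no term-by-term matching of Hurwitz numbers is available, and I would instead pass to generating functions. By Theorem~\ref{MainTh} these limits exist and equal the sums above; on the other hand it is classical that $\mathsf{m}(m)$ is the $m$-th moment of the Marchenko--Pastur law $\varrho_c$ of ratio $c$ (the $\ell=1$, $g=0$ spectral density of the LUE), and the same computation, valid for $c>1$ since $\mathrm{supp}\,\varrho_c$ is then bounded away from $0$, identifies $\mathsf{w}(m)$ with its $(-m)$-th moment. Writing $G(s)=\int(s-x)^{-1}d\varrho_c(x)$ for the Stieltjes transform, its expansions at $\infty$ and at $0$ give
\ben
M(z):=\sum_{m\ge0}\mathsf{m}(m)z^m=\tfrac1zG(1/z),\qquad W(z):=\sum_{m\ge0}\mathsf{w}(m)z^m=1-zG(z).
\een
Equating coefficients shows that $\mathsf{w}(n+1)=(c-1)^{-(2n+1)}\mathsf{m}(n)$ for all $n$ is equivalent to the single identity $W(z)-1=\tfrac{z}{c-1}M\!\big(\tfrac{z}{(c-1)^2}\big)$, i.e. to the reflection
\be
-\,zG(z)=(c-1)\,G\!\left(\frac{(c-1)^2}{z}\right).
\label{eq:reflplan}
\ee

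The final step is to prove~\eqref{eq:reflplan} from the spectral curve of $\varrho_c$, the quadratic $sG(s)^2-(s+1-c)G(s)+1=0$. Setting $u=G(z)$ and $G_2=-zu/(c-1)$, I would substitute $s=(c-1)^2/z$ and $G=G_2$ into the quadratic and eliminate $z$ using $z=\big((c-1)u+1\big)/\big(u(1-u)\big)$, the relation coming from the curve at $s=z$; the equation at $s=(c-1)^2/z$ then collapses to an identity, confirming $G_2=G((c-1)^2/z)$ and hence~\eqref{eq:reflplan}. I expect the main obstacle to be conceptual rather than computational: one must (i) identify the abstract genus-zero Hurwitz sums produced by Theorem~\ref{MainTh} with the concrete resolvent $G$ of $\varrho_c$, and (ii) establish the reflection symmetry~\eqref{eq:reflplan}, taking care to choose the branch of the quadratic analytic at the relevant point and to justify rearranging the two expansions of $G$ around $0$ and $\infty$. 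Once the spectral curve is in hand the algebra in the last step is short. Equivalently, and closer to the route indicated in the abstract, \eqref{eq:duality} also follows by inserting a finite-$N$ reflection relation between LUE and inverse-LUE moments into the two expansions~\eqref{mainth:CumulantInverse}--\eqref{propeq:CumulantWishartE} and reading off the $g=0$ coefficients.
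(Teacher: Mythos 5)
Your argument is correct, but it takes a genuinely different route from the paper's --- in fact it is essentially the analytic proof that the paper deliberately sets aside (``This result can be obtained using analytic methods\ldots We give here a combinatorial proof''). You reduce the duality to the reflection symmetry $-zG(z)=(c-1)\,G\bigl((c-1)^2/z\bigr)$ of the Marchenko--Pastur resolvent: substituting $v=-zG(z)/(c-1)$ and $s=(c-1)^2/z$ into $sv^2-(s+1-c)v+1=0$ and using $z(G^2-G)=(1-c)G-1$ indeed collapses the equation to zero, and the branch is the right one since both sides behave like $z/(c-1)^2\sim 1/s$ as $z\to 0$; the coefficient bookkeeping ($W(z)=1-zG(z)$, $M(z)=z^{-1}G(1/z)$, $\mathsf{w}(n+1)=(c-1)^{-(2n+1)}\mathsf{m}(n)$) checks out, e.g.\ $\mathsf{w}(2)=c(c-1)^{-3}$ matches the paper's $\EE\tr W^{-2}\to 2$ at $c=2$. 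The two points you flag yourself --- identifying the genus-zero sums of Theorem~\ref{MainTh} with the positive and negative moments of $\varrho_c$ for $c>1$, and justifying the expansion of $G$ at $0$ --- are the only places needing care, and both are standard. The paper instead stays inside the symmetric group: it specialises Theorem~\ref{MainThV2} to the full cycles $(1\ldots n+1)$ and $(1\ldots n)$ at defect $d=0$, so the two limits become $\sum_{r}(c-1)^{n-r}\#\Fm_{n+1,r}$ and $\sum_{r}c^{n-r}\#\Fsm_{n,r}$, and then builds a map $\Phi_n:\Fm_{n+1}\to\Fsm_n$ (extracting the record times of a monotone minimal factorisation) whose fibre over a point of $\Fsm_{n,l}$ meets $\Fm_{n+1,r}$ in exactly $\binom{n-l}{r-l}$ elements (Lemma~\ref{lem:Preimage}); the binomial theorem then turns one polynomial into the other. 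Your route buys brevity and a clean conceptual statement via the spectral curve, but it leans on the external identification with the Marchenko--Pastur law and in substance reproves a known analytic fact; the paper's route buys a new enumerative relation between monotone and strictly monotone genus-zero Hurwitz numbers, which is the stated point of including this corollary. Your closing alternative --- inserting the finite-$N$ reciprocity \eqref{eq:reciprocity_or} into the expansions \eqref{mainth:CumulantInverse}--\eqref{propeq:CumulantWishartE} --- is also valid but likewise circular relative to the paper's aim, since that reciprocity is itself proved elsewhere by Laguerre-polynomial methods.
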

This result can be obtained using analytic methods \cite{C,FW,CMOS}. We give here a combinatorial proof relying on a relation between monotone and strictly monotone Hurwitz paths.
\par
The duality~\eqref{eq:duality} is the projection to leading order in $1/N$ of an exact reciprocity law for the LUE recently found in~\cite[Proposition 2.1]{CMOS}:
\be
\EE \tr \left(N W\right)^{-(n+1)}=\left(\prod_{j=-n}^n\frac{1}{\alpha+j}\right)\EE \tr \left(N W\right)^{n}.
\label{eq:reciprocity_or}
\ee 
In the notation of this paper the above relation reads
\be
N^{-(n+1)}\frac{\Gamma(\alpha+n+1)}{\Gamma(n+1)}\mathcal{C}_{W^{-1}}((n+1))=N^{n}\frac{\Gamma(\alpha-n)}{\Gamma(n)}\mathcal{C}_{W}((n)).
\label{eq:reciprocity}
\ee
\par
By Theorem~\ref{MainTh}, it is possible to rephrase the duality~\eqref{eq:reciprocity_or} (or \eqref{eq:reciprocity}) as a functional relation between generating functions of monotone and strictly monotone Hurwitz numbers. Define the formal power series
\begin{align}
\Hm_g(n;x)&=\sum_{\nu\vdash n}x^{-\#\nu}\Hm_g((n),\nu),\label{eq:gen_Hur1}\\
\Hsm_g(n;x)&=\sum_{\nu\vdash n}x^{-\#\nu}\Hsm_g((n),\nu). \label{eq:gen_Hur2}
\end{align}
Then, combining the duality~\eqref{eq:reciprocity_or} with the explicit formulae~\eqref{mainth:CumulantInverse}-\eqref{propeq:CumulantWishartE} for $\mathcal{C}_{W^{-1}}$ and  $\mathcal{C}_{W}$, and comparing the coefficients of the $1/N$-expansions we can get a functional relation for the generating functions~\eqref{eq:gen_Hur1} and~\eqref{eq:gen_Hur2}.
Note that 
%\ben
%\prod_{j=-n}^n\frac{1}{u^{-1}+j}=u^{2n+1}\prod_{j=1}^n\frac{1}{1-j^2u^2}
%\een
\be
\prod_{j=-n}^n\frac{1}{\alpha+j}=\frac{1}{\alpha^{2n+1}}\prod_{j=1}^n\left(1-\frac{j}{\alpha}^2\right)
=\sum_{g\geq0}h_g(1^2,\dots,n^2)\alpha^{-g}
\ee
where 
\ben
h_g(1^2,\dots,n^2)=\sum_{\substack{\ell_1,\cdots,\ell_n\geq0\\\ell_1+\cdots+\ell_n=g}}1^{2\ell_1}2^{2\ell_2}\cdots n^{2\ell_n}
\een
 is the complete symmetric function of degree $g$ evaluated on the square integers $1^2,\dots,n^2$ (see Lemma~\ref{lem:symm} below).
 We learned from~\cite{MN2} that the numbers 
 \be
 T(n+g,n)=h_g(1^2,\dots,n^2)
 \ee
 are known as \emph{Carlitz-Riordan central factorial numbers}, and are given by the explicit formula
 \be
 T(a,b)=2\sum_{j=0}^n(-1)^{b-j}\frac{j^{2a}}{(b-j)!(b+j)!}.
 \ee
 Putting all together we get the following functional equation.
\begin{prop}
\be
\left(\frac{x-1}{x}\right)^{n+1}\Hm_g(n+1;x-1) =n\sum_{j=0}^g\left(\frac{x-1}{x}\right)^{2j}T(n+g-j,n)\Hsm_j(n;x),
\label{eq:func_rel}
\ee
%where 
%\be
%w(n;g)=\sum_{\substack{\ell_1,\cdots,\ell_n\geq0\\\ell_1+\cdots+\ell_n=g}}1^{2\ell_1}2^{2\ell_2}\cdots n^{2\ell_n}.
%\ee
\end{prop}
Functional relations and some explicit formulae for the generating functions of (monotone) Hurwitz numbers have been considered in the literature, see~\cite{GGPNHCIZ,GOULDEN20131,Dubrovin}. To our knowledge, the relation~\eqref{eq:func_rel} is new. It would be interesting to find a combinatorial proof of it.
\par
There exists a duality similar to~\eqref{eq:duality}, for covariances  ($\ell=2$) of LUE moments at leading order in $1/N$. If $\mu=(\mu_1,\mu_2)\vdash n$, then~\cite[Theorem 7.3]{CMOS} 
\be
\lim_{N\to \infty}\frac{\mathcal{C}_{W^{-1}}(\mu)}{(c-1)^{-|\mu|}}=  \lim_{N\to \infty}\frac{\mathcal{C}_{W}(\mu)}{(c-1)^{|\mu|}}.
\label{eq:duality_cov}
\ee
By Theorem~\ref{MainThV2}, this is equivalent to a relation between generating functions
\be
\sum_{r\geq 0}z^r \#\Fm_{n,r,0}(\alpha)=  \sum_{r\geq 0}(z+1)^{n-r} \#\Fsm_{n,r,0}(\alpha)\quad \text{for $z>0$},
\ee
when $\alpha\in S_{n}$ has two cycles $\#[\alpha]=2$.
\par
The enumerative properties of the integer moments of Wishart matrices, suggest to reinterpret various known results in random matrix theory from a combinatorial point of view.  It is known that the moments of LUE (and any other $\beta$-ensemble) satisfy a set of recursions known as `loop equations' (see~\cite[Lemma 7.1]{CMOS}) and it is natural to expect that they have a combinatorial explanations. 
\par
A special property of the LUE, is its connection to the Laguerre polynomials which led Haagerup and Thorbj\o rnsen to discover an exact three-term recursive relation~\cite[Theorem 8.2]{HT} for moments of $W$ (the analogue of the Harer-Zagier recursion of the GUE). Later, it was observed in~\cite{CMSV2} that the Haagerup-Thorbj\o rnsen recursion extends to the moments of $W^{-1}$. For the inverse LUE with parameter $c=2$, the recursion reads
\ben
(N^2-n^2)(n+1)\EE \tr W^{-(n+1)}-3N^2(2n-1)\EE \tr W^{-n}\\
+N^2(n-2)\EE \tr W^{-(n-1)}=0.
\een
Denote by $S(n,d)=\sum_{r\geq0}\#\Fm_{n,r,d}((1\ldots n))$ the number of monotone paths in the Cayley graph on $S_n$ that start at the full cycle $(1\ldots n)$ and, after an  arbitrary (finite) number of steps $\tau_1,\ldots,\tau_r$ have a defect $2d$.
Then, Theorem~\ref{MainThV2} combined with the three-term recursion above gives a recurrence for the numbers $S(n,d)$:
\begin{multline}
(n+1)S(n+1,d+1)-3(2n-1)S(n,d+1)\\
+(n-2)S(n-1,d+1)=n^2(n+1)S(n+1,d)
\label{eq:rec_kg}
\end{multline}
The above recursion appeared in the random matrix approach to the time-delay~\cite[Corollary 1.4]{CMSV2} where the initial conditions are
\be
S(n,0)=\setlength\arraycolsep{1pt}
\; {}_2 F_1\left(\begin{matrix}1-n,n\\2\end{matrix};-1\right),\quad S(0,d)=\delta_{0,d},\quad S(1,d)=\delta_{0,d}.
\ee
Note that $S(n,0)$ is the large Schr\"oder number.
\par
The existing proofs of~\eqref{eq:reciprocity}-\eqref{eq:duality_cov} and~\eqref{eq:rec_kg} are based on special properties of the Laguerre polynomials, but it should be possible to prove these remarkable formulae using algebraic methods. Further study is in progress.

\section{Proofs}
 
\subsection{Proof of the main Theorem}

 We shall give a proof that hinges on the following two propositions. The first one is a restatement of~\cite[Theorems 1 and 4]{GLM} and \cite[Theorems 3.1 and 4.3]{CMS} in a notation which is shorter and better adapted to the purposes of this paper.

 \begin{prop}[\!\!\cite{GLM,CMS}]  \label{prop:MomentCoeff}For any $i,j\in[N]^n$
\ben
\EE\prod_{k=1}^n W_{i(k)j(k) }= N^{-n}  \sum_{\substack{ \sigma\in S_n: \\  i\circ \sigma= j}} \Omega_{n,cN}(\sigma) ,
\een
and,  for $c> 1+\frac{n}{N}$,
\ben
\EE\prod_{k=1}^n W^{-1}_{i(k)j(k) }=(-N)^n  \sum_{\substack{ \sigma\in S_n: \\ i\circ \sigma= j}}   \Omega^{-1}_{n,(1-c)N}(\sigma),
\een
 where for any permutation $\sigma\in S_n$ and $z\in \C,$ 
 $$\Omega_{n,z}(\sigma)= z^{\#\sigma},$$
 whereas  for $|z|>n-1,$ $\Omega^{-1}_{n,z}: S_n\to \C,$ denotes the unique  function   such that 
 $$\Omega^{-1}_{n,z}* \Omega_{n,z}= \Omega_{n,z}*\Omega_{n,z}^{-1}= \delta_{\mathrm{id}}, $$
 where $*$ is the convolution product of functions on the symmetric group $S_n.$   
 \end{prop}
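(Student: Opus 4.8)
The plan is to handle the two identities by different means: the first (for $W$) has a short self-contained proof, whereas the second (for $W^{-1}$) is the genuinely new input and will be imported from~\cite{GLM,CMS} after reconciling notation.

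For the $W$-identity I would start from the integer case $M=cN\in\N$, writing $W=N^{-1}XX^\dagger$ with $X$ an $N\times M$ standard complex Gaussian matrix, so that $(XX^\dagger)_{ab}=\sum_p X_{ap}\overline{X_{bp}}$. Expanding $\prod_k W_{i(k)j(k)}$ and applying the complex Wick formula $\EE[X_{ab}\overline{X_{cd}}]=\delta_{ac}\delta_{bd}$, the contractions of the $n$ factors $X$ against the $n$ factors $\overline X$ are indexed by a permutation $\sigma\in S_n$; each $\sigma$ forces $i(k)=j(\sigma(k))$ for all $k$ and constrains the auxiliary index $p$ to be constant along the cycles of $\sigma$, producing a factor $M^{\#\sigma}$. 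Collecting these and restoring $N^{-n}$ yields $N^{-n}\sum_\sigma(cN)^{\#\sigma}\prod_k\delta_{i(k),j(\sigma(k))}$, and the relabelling $\sigma\mapsto\sigma^{-1}$ (which preserves $\#\sigma$) rewrites the constraint $i=j\circ\sigma$ as $i\circ\sigma=j$, giving exactly the stated formula with $\Omega_{n,cN}(\sigma)=(cN)^{\#\sigma}$. To remove the integrality assumption I would invoke polynomiality: the right-hand side is a polynomial in $M$ of degree $\le n$, and so is the left-hand side (a standard feature of the Wishart exponential family; cf.~\cite{GLM}); agreement at all integers $M>N-1$ then forces agreement for all real $M>N-1$. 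Equivalently, this is~\cite[Thm 3.1]{CMS}.

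Before turning to $W^{-1}$ I would record why $\Omega^{-1}_{n,z}$ is well defined for $|z|>n-1$. Identifying class functions with central elements of $\C[S_n]$, the Jucys identity gives $\Omega_{n,z}=\sum_\sigma z^{\#\sigma}\sigma=\prod_{k=1}^n(z+J_k)$ with $J_k$ the Jucys-Murphy elements; as the eigenvalues of the $J_k$ are contents of Young diagrams, all lying in $\{-(n-1),\dots,n-1\}$, each factor $z+J_k$ is invertible once $|z|>n-1$, hence so is $\Omega_{n,z}$, and its convolution inverse is the unique function with $\Omega^{-1}_{n,z}*\Omega_{n,z}=\delta_{\mathrm{id}}$. (This is the element that resurfaces as the Weingarten function through~\cite{Novak}.)

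The $W^{-1}$-identity is the substantive statement, and I would deduce it directly from~\cite[Thm 4]{GLM} (equivalently~\cite[Thm 4.3]{CMS}); here the work is one of translation. The checks are: (i) the parameter, $(1-c)N=N-M=-\alpha$, with the convergence condition $|z|>n-1$ reading $(c-1)N>n-1$, which is implied by the hypothesis $c>1+n/N$; and (ii) the prefactor $(-N)^n$, including the sign $(-1)^n$ accompanying the reflection $\alpha\mapsto-\alpha$ that relates the $W$- and $W^{-1}$-moments. I expect the main obstacle to be precisely this bookkeeping --- matching the normalisation conventions of~\cite{GLM,CMS} (their scaling of $W$, their use of $\alpha$ versus $M$, and the placement of the sign) to the present ones. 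A safe way to pin the constants down is to test small $n$: for $n=1$ one has $\Omega^{-1}_{1,z}(\mathrm{id})=z^{-1}$, so the formula predicts $\EE W^{-1}_{ij}=-N\cdot\frac{1}{(1-c)N}\,\delta_{ij}=\frac{1}{c-1}\delta_{ij}$, consistent with $\EE\tr W^{-1}=(c-1)^{-1}$, and the case $n=2$ fixes any residual ambiguity. Finally, I would stress that this identity cannot be obtained formally from the first: although $(W^{-1})^{\otimes n}=(W^{\otimes n})^{-1}$ and both $\EE[W^{\otimes n}]$ and $\EE[(W^{-1})^{\otimes n}]$ lie in $\rho(\C[S_n])$ by Schur-Weyl duality (each commutes with every $U^{\otimes n}$ since $W\eqlaw UWU^\dagger$), expectation does not commute with the matrix inverse, so the emergence of the convolution inverse $\Omega^{-1}$ is a genuine theorem of~\cite{GLM} rather than a formal consequence of the $W$-identity.
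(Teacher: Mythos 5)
Your proposal is correct and matches the paper's treatment: the paper offers no proof of this proposition, presenting it purely as a notational restatement of the cited results of Graczyk--Letac--Massam and Collins--Matsumoto--Saad, and you likewise import the substantive $W^{-1}$ identity from those same references (with sensible normalisation checks), adding only a routine Wick-calculus plus polynomiality-in-$M$ derivation of the $W$ identity and the standard Jucys--Murphy content bound for invertibility of $\Omega_{n,z}$, both of which are consistent with what the paper takes for granted (cf.\ its Proposition~\ref{prop:FactoWeingarten}).
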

 
The function $\Omega_{n,z}^{-1},$ more commonly denoted by $\mathrm{Wg}_{n,z}$, is called the \emph{unitary Weingarten  function} and admits a remarkable  factorisation property (Proposition~\ref{prop:FactoWeingarten} below). To state it,  we  shall   identify the unital  algebra $(\C^{S_n},*, \delta_{\mathrm{id}})$ with  the group 
algebra $(\C[S_n],., \mathrm{id}),$ that is, the algebra of formal linear combinations of permutations with a product rule extending linearly the product 
of the group $S_n,$ thanks to the isomorphism that maps a function   $f\in \C^{S_n}$ to $\sum_{\sigma\in S_n} f(\sigma)\sigma\in \C[S_n].$  We shall 
keep abusively the same notations for $\Omega_{n,z}$ and $\Omega_{n,z}^{-1}$ viewed as elements of $\C[S_n]$ instead of functions.
\par
The \emph{Jucys-Murphy element} $J_i$ \cite{Jucys} in $\C[S_n]$ is the sum of all transpositions interchanging $i$ with a smaller number:
\begin{align*}
J_1&=0\\
J_2&=(1\,2)\\
J_3&=(1\,3)+(2\,3)\\
&\vdots\\
J_n&= (1\, n)+(2\,n)+\ldots+ (n-1\,n).
\end{align*}
They form a commutative family in the group algebra $\C[S_n]$.

\begin{prop}[\!\!\cite{Collins03,Novak}]  \label{prop:FactoWeingarten}For any $z\in \C,$  

\ben
\Omega_{n,z}= (z+J_1)(z+J_2)\cdots(z+J_n)
\een
 and, for any $z\in \C\setminus \{1-n,2-n,\ldots, n-2,n-1\},$  
  \ben
   \Omega^{-1}_{n,z}= (z+J_1)^{-1}(z+J_2)^{-1}\cdots(z+J_n)^{-1}.
   \een
 \end{prop}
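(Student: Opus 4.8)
The plan is to prove the first (product) formula by induction on $n$ and then to deduce the inverse formula from the commutativity of the Jucys--Murphy family together with a bound on the spectrum of each $J_i$. Writing $\#_m\sigma$ for the number of cycles of a permutation regarded as an element of $S_m$, I would set $\Omega_{n,z}=\sum_{\sigma\in S_n}z^{\#_n\sigma}\sigma$ and fix once and for all the convention that group elements compose right-to-left. The base case $n=1$ is immediate, since $J_1=0$ and $\Omega_{1,z}=z\,\mathrm{id}$. For the inductive step I would identify $S_{n-1}\subset S_n$ with the permutations fixing $n$, so that $\#_n\rho=\#_{n-1}\rho+1$ for $\rho\in S_{n-1}$, and reduce everything to the single identity $\Omega_{n,z}=\Omega_{n-1,z}\,(z+J_n)$. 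Because the $J_i$ commute, placing $J_n$ as the last factor is harmless, so the inductive hypothesis $\prod_{i=1}^{n-1}(z+J_i)=\Omega_{n-1,z}$ then closes the argument.

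The heart of the induction is the expansion of $\Omega_{n-1,z}(z+J_n)$ into two pieces. The term $z\,\Omega_{n-1,z}=\sum_{\rho\in S_{n-1}}z^{\#_{n-1}\rho+1}\rho$ reproduces exactly the permutations $\sigma\in S_n$ with $\sigma(n)=n$, each carrying the correct weight $z^{\#_n\sigma}$. For the remaining term I would write $J_n=\sum_{j=1}^{n-1}(j\,n)$ and analyse $\rho\,(j\,n)$: since $\rho$ fixes $n$, the points $n$ and $j$ lie in distinct cycles of $\rho$, so multiplication by $(j\,n)$ merges these two cycles and $\#_n(\rho(j\,n))=\#_n\rho-1=\#_{n-1}\rho$. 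The map $(\rho,j)\mapsto\rho(j\,n)$ is then a bijection from $S_{n-1}\times\{1,\dots,n-1\}$ onto $\{\sigma\in S_n:\sigma(n)\neq n\}$, with inverse $\sigma\mapsto(\sigma(j\,n),j)$ where $j=\sigma^{-1}(n)$. Hence $\Omega_{n-1,z}\,J_n=\sum_{\sigma(n)\neq n}z^{\#_n\sigma}\sigma$, and adding the two contributions recovers $\Omega_{n,z}$ term by term.

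To obtain the inverse formula I would exploit that the $J_i$ form a commuting family, each diagonalisable with integer eigenvalues: the joint spectrum consists of the content vectors of standard Young tableaux, so every eigenvalue of $J_i$ lies in $\{1-i,\dots,i-1\}\subseteq\{1-n,\dots,n-1\}$. Consequently, for $z\notin\{1-n,2-n,\dots,n-2,n-1\}$, each factor $z+J_i$ has no zero eigenvalue and is therefore invertible in $\C[S_n]$, its inverse being a polynomial in $J_i$ and hence lying in the commutative subalgebra generated by the $J_i$. Commutativity then lets me pair factors freely, so $\prod_{i=1}^n(z+J_i)^{-1}$ is a two-sided inverse of $\Omega_{n,z}=\prod_{i=1}^n(z+J_i)$; by the uniqueness built into the definition of the convolution inverse this gives $\Omega^{-1}_{n,z}=\prod_{i=1}^n(z+J_i)^{-1}$. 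Moreover the excluded set is sharp, being exactly the values of $z$ for which some $z+J_i$ has a nontrivial kernel.

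The main obstacle I anticipate is bookkeeping rather than conceptual: one must commit to a single composition convention and track with care how the cycle count changes under the embedding $S_{n-1}\hookrightarrow S_n$ and under multiplication by $(j\,n)$, since a single off-by-one error in $\#$ destroys the matching of exponents of $z$. The only genuinely non-elementary input is the spectral bound on the $J_i$; if one prefers to avoid invoking representation theory, it can be replaced by the fact that each $J_i$ satisfies a monic polynomial equation whose roots lie in $\{1-i,\dots,i-1\}$, which constrains its eigenvalues directly and suffices for the invertibility argument above.
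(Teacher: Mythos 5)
The paper states this proposition as a known result, citing \cite{Collins03,Novak}, and contains no proof of its own, so there is no internal argument to compare against; your proof is correct and is essentially the standard one from those references. The induction $\Omega_{n,z}=\Omega_{n-1,z}(z+J_n)$ via the cycle-merging bijection $(\rho,j)\mapsto\rho\,(j\,n)$ between $S_{n-1}\times[n-1]$ and $\{\sigma\in S_n:\sigma(n)\neq n\}$ is exactly Jucys' computation, and the inverse formula correctly reduces to the invertibility of each commuting factor $z+J_i$, which your spectral bound (the eigenvalues of $J_i$ are contents of the box containing $i$ in standard tableaux, hence lie in $\{1-i,\dots,i-1\}$) guarantees for all $z$ outside the excluded set.
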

In the proof of  the main theorem we will use classical manipulation of cumulants.
\begin{notation}Setting for partitions $\pi,\nu \in 
\mathcal{P}(n),$ $\pi \le \nu,$    whenever all blocks  of $\pi$ are included in those of $\nu,$ (the partition $\nu$ is said coarser than $\pi$) defines a 
structure of poset on $\mathcal{P}(n)$ with maximal element $1_n= \{[n]\} $ 
and minimal element  $0_n=\{\{1\},\{2\},\ldots ,\{n\}\}.$  For any $\mu \in \mathcal{P}(n),$ we shall write $\mathcal{P}(n)_{\ge \mu} =\{ \pi\in \mathcal{P}
(n): \pi \ge \mu\}$ the set of partitions coarser than $\mu.$
\end{notation}

\begin{lem}[\!\!\cite{Rota}]
\label{lem:MomentCumulant}
Let $\mu \in \mathcal{P}(n)$ be a fixed set partition. For any function $E\colon \mathcal{P}(n)_{\ge \mu}\to \C$, there exists a 
unique  $C\colon \mathcal{P}(n)_{\ge \mu}\to \C$ such that  for all $\pi\in \mathcal{P}(n),$
\begin{equation}
E(\pi)=\sum_{\mu\le \nu \le \pi }    C(\nu).  \label{MomentCumulant}
\end{equation}
\end{lem}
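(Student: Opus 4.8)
The statement is the classical Möbius inversion on the finite poset $P=\mathcal{P}(n)_{\ge\mu}$, and the plan is to establish existence and uniqueness of $C$ simultaneously by induction on $P$, then record the resulting closed form. Conceptually, relation \eqref{MomentCumulant} says that $E$ is the image of $C$ under the \emph{zeta transform} of $P$: writing $\zeta(\nu,\pi)=\mathbbm{1}[\nu\le\pi]$, equation \eqref{MomentCumulant} reads $E(\pi)=\sum_{\nu\in P}\zeta(\nu,\pi)\,C(\nu)$. Since $\mathcal{P}(n)$ is finite, $P$ is finite, and ordering its elements by any linear extension of $\le$ turns $\zeta$ into an upper-triangular matrix with $1$'s on the diagonal; such a matrix is invertible over $\C$, which already yields both the existence and the uniqueness of $C=\zeta^{-1}E$ for every $E$. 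I would then present this concretely as an induction.

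To do so, first introduce the rank $\rho(\pi)=\#\mu-\#\pi$, which satisfies $\rho(\mu)=0$ and $\rho(\nu)<\rho(\pi)$ whenever $\mu\le\nu<\pi$ (a strictly finer partition has strictly more blocks). For \textbf{uniqueness}, isolating the top term $\nu=\pi$ in \eqref{MomentCumulant} forces
\begin{equation*}
C(\pi)=E(\pi)-\sum_{\mu\le\nu<\pi}C(\nu),
\end{equation*}
so that $C(\pi)$ is determined by the values $C(\nu)$ at strictly smaller rank; by strong induction on $\rho(\pi)$, with base case $C(\mu)=E(\mu)$, at most one $C$ can satisfy \eqref{MomentCumulant}. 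For \textbf{existence}, I define $C$ by exactly this recursion and substitute back: by the same induction on $\rho(\pi)$, the defining equation reassembles into \eqref{MomentCumulant} at $\pi$. Hence $C$ exists and is unique.

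Finally, identifying $\zeta^{-1}$ with the Möbius function $\mathfrak{m}$ of $P$ gives the explicit inversion $C(\pi)=\sum_{\mu\le\nu\le\pi}\mathfrak{m}(\nu,\pi)\,E(\nu)$; because $\mathcal{P}(n)_{\ge\mu}$ is an up-set, the interval $[\mu,\pi]$ it induces coincides with the interval $[\mu,\pi]$ in $\mathcal{P}(n)$, so $\mathfrak{m}$ is the usual Möbius function of the partition lattice (a product of factors $(-1)^{k-1}(k-1)!$ over the blocks of $\pi$). There is no genuine obstacle here: the only points requiring care are the bookkeeping that every interval $[\mu,\pi]$ lies inside $\mathcal{P}(n)_{\ge\mu}$, so that all the sums in \eqref{MomentCumulant} and in the recursion are well defined, and the finiteness of $P$, which makes the induction well founded and the triangular system genuinely invertible; both are immediate.
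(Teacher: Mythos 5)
Your argument is correct. The paper itself gives no proof of this lemma, simply citing Rota, and what you have written is precisely the standard justification: the system \eqref{MomentCumulant} is triangular with unit diagonal over the finite poset $\mathcal{P}(n)_{\ge\mu}$, so existence and uniqueness follow either from invertibility of the zeta matrix or, equivalently, from your strong induction on $\rho(\pi)=\#\mu-\#\pi$ (noting, as you do, that refining strictly increases the number of blocks, so $\rho$ strictly decreases along $\nu<\pi$). The only cosmetic point is that the quantifier ``for all $\pi\in\mathcal{P}(n)$'' in the statement should be read as $\pi\in\mathcal{P}(n)_{\ge\mu}$, which you handle implicitly; the closing remark about the explicit M\"obius function of the partition lattice is accurate but not needed for the claim.
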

\begin{notation}
If $(Y_1,\ldots, Y_n)$ are $n$ variables on the same probability space,  with all their joint moments of degree less than $n$ and for any  $\pi \in 
\mathcal{P}(n),$ $\EE_\pi(Y_1,\ldots,Y_n)= \prod_{B\in \pi } \EE\prod_{k\in B} Y_k$,  then  the value at a partition 
$\pi\in \mathcal{P}(n)$ of the unique solution to \eqref{MomentCumulant},  is denoted  by $C_{\mu,\pi}(Y_1,\ldots,Y_n)$.  It  is   a \emph{relative 
cumulant}:  for any $n\ge 1, $  $C_{0_n,1_n}(Y_1,\ldots ,Y_n)$, is the cumulant 
$C_n(Y_1,\ldots, Y_n)$, whereas for any $\mu,\pi\in \mathcal{P}(n),$ with $\mu\le \pi,$ 
\ben
C_{\mu,\pi}(Y_1,\ldots, Y_n)= \prod_{S\in \pi} C_{\#\{B\in \mu\colon B\subset S\}} \left( \prod_{k\in B} Y_k, B\in \mu \text{ with } B\subset S \right).
\een
For any pair of transpositions $\tau_1=(a_1\,b_1)$, $\tau_2=(a_2\, b_2),$ with $a_i<b_i$, let 
us write $\tau_1\le \tau_2$ when $b_1\le b_2$. $\Wm_r$ is the set of tuples of transpositions $(\tau_1,\ldots, \tau_r)$ with    $\tau_1\le \tau_2\le \ldots\le \tau_r$. For any partition $\pi \in \mathcal{P}(n),$ let us  denote   by $S_\pi$ the subgroup of $S_n$ consisting of permutations $\sigma\in 
S_n$ with $\sigma(B)=B$ for all blocks $B\in \pi$,  set   $\Wm_r(\pi) = \Wm_r \cap
S_\pi^r$ and for any $A\subset [n],$  $S_A,$  the group of permutations of $A$.
\end{notation}
In the proof we will use the following standard fact on symmetric functions.
\begin{lem}
\label{lem:symm}
For each integer $n\in\N$, and indeterminates $t,x_1,x_2,\dots,x_n$,
\begin{align*}
\prod_{i\geq 1}(1+x_it)=\sum_{r\geq0}e_r(x)t^r,\qquad 
\prod_{i\geq 1}(1-x_it)^{-1}=\sum_{r\geq0}h_r(x)t^r,
\end{align*}
where $e_r(x)=\sum_{i_1<i_2<\cdots<i_r}x_{i_1}x_{i_2}\cdots x_{i_r}$ and $h_r(x)=\sum_{i_1\leq i_2\leq\cdots\leq i_r}x_{i_1}x_{i_2}\cdots x_{i_r}$ are the elementary and complete symmetric functions, respectively.
\end{lem}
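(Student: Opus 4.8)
The plan is to prove both identities by direct expansion of the products, working throughout in the ring of formal power series in the single indeterminate $t$ with coefficients in the polynomial ring $\C[x_1,\dots,x_n]$. There is no serious obstacle here; the content is purely bookkeeping, and the only point requiring care is to match the index conventions built into $e_r$ and $h_r$ with the monomials produced by the expansion.

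For the first identity I would expand $\prod_{i=1}^n(1+x_it)$ by selecting from each factor either the summand $1$ or the summand $x_it$. A choice that picks $x_it$ in exactly the factors indexed by a subset $\{i_1<\cdots<i_r\}\subseteq[n]$ contributes $x_{i_1}\cdots x_{i_r}t^r$. Summing over all such subsets of size $r$ and collecting by the power of $t$ yields the coefficient $\sum_{i_1<\cdots<i_r}x_{i_1}\cdots x_{i_r}=e_r(x)$, which is the claim. The sum over $r$ terminates at $r=n$, consistent with $e_r=0$ for $r>n$.

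For the second identity I would first expand each factor as a geometric series, $(1-x_it)^{-1}=\sum_{k\ge0}x_i^kt^k$, which is legitimate at the level of formal power series. Multiplying the $n$ series together, the coefficient of $t^r$ becomes $\sum x_1^{k_1}\cdots x_n^{k_n}$, the sum taken over all tuples $(k_1,\dots,k_n)\in\N^n$ with $k_1+\cdots+k_n=r$. This is precisely the sum of all degree-$r$ monomials in $x_1,\dots,x_n$. Rewriting each monomial $x_1^{k_1}\cdots x_n^{k_n}$ as a product $x_{i_1}\cdots x_{i_r}$ over a weakly increasing sequence $i_1\le\cdots\le i_r$ (listing each index $i$ with multiplicity $k_i$) produces exactly $h_r(x)$, completing the argument.

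The hard part is genuinely nonexistent, since this is the standard pair of generating-function identities for the elementary and complete symmetric functions; the only diligence required is to phrase the second expansion in the formal power series ring so that the geometric series is valid irrespective of convergence, and to confirm that the two indexing conventions agree, which the bijection between exponent vectors and weakly increasing index words makes transparent.
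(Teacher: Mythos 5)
Your proof is correct: the subset-selection expansion for the elementary symmetric functions and the geometric-series expansion (in the formal power series ring) for the complete symmetric functions are exactly the standard arguments, and your bijection between exponent vectors and weakly increasing index words correctly matches the convention in the definition of $h_r$. The paper itself states this lemma as a standard fact and offers no proof, so there is nothing to compare against; your write-up supplies the canonical argument, with the minor implicit (and correct) reading of $\prod_{i\geq 1}$ as $\prod_{i=1}^{n}$.
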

 \begin{proof}[Proof of  Theorem \ref{MainThV2}]  Let $\mu=(\mu_1,\ldots, \mu_l) \vdash n$ and $\alpha\in S_n$ be a permutation of type $\mu$ and  
let $\pi_\alpha\in \mathcal{P}(n)$ be the set partitions with blocks given by cycles of $\alpha.$   Multilinearity of cumulants yields
\begin{equation}
C_l(\Tr W^{- \mu_1}, \ldots, \Tr W^{- \mu_\ell})=\sum_{\substack{i,j\in[N]^n\colon\\  i\circ\alpha= j}}  C_{\pi_\alpha,1_n}( W^{-1}_{i(1) j(1)}, \ldots ,W^{-1}_{i(n) j(n)}).
\label{iproof:CumulantTraces}
\end{equation}
According to Proposition \ref{prop:FactoWeingarten}, if $(c-1)N>n$, 
\begin{multline}
(-N)^n\Omega^{-1}_{n,(1-c)N}= \prod_{i=1}^n(c-1- N^{-1}J_i)^{-1} \\=(c-1)^{-n} \sum_{r\ge 0}  h_r(J)  ((c-1)N)^{-r} 
=(c-1)^{-n} \sum_{r\ge 0}    ((c-1)N)^{-r}   \sum_{(\tau_i)_{i=1}^r\in \Wm_r} \tau_1\tau_2\ldots \tau_r.,
\label{WeingartenMonotone}
\end{multline} 
where we used Lemma~\ref{lem:symm} and the fact that the transpositions in $J_i$ are all majorized by the transpositions in $J_j$ when $i<j$.
Combined with Proposition \ref{prop:MomentCoeff}, this leads   for any  $i,j\in[N]^n$ to
$$\begin{aligned}\EE W^{-1}_{i(1) j(1)}\ldots &W^{-1}_{i(n) j(n)}=  \\&(c-1)^{-n} \sum_{r\ge 0} ((c-1)N)^{-r} \#\{ (\tau_i)\in \Wm_r\colon j\circ \tau_1\ldots 
\tau_r=i \}.  
\end{aligned} $$
On the one hand, after relabelling, the same argument applied to each block of a partition $
\pi\in\mathcal{P}(n)$ gives  
\begin{multline*}
\EE_\pi \left(W^{-1}_{i(1) j(1)}, \ldots ,W^{-1}_{i(n) j(n)}\right)=  \\ \prod_{B\in \pi}\left((c-1)^{-\#B} \sum_{r\ge 0} ((c-1)N)^{-r} \#\{ (\tau_i)\in \Wm_r(B)\colon 
j_{|B}\circ \tau_1\ldots \tau_r=i_{|B} \}\right).
\end{multline*} 
Distributing the terms in the  product reads 
$$
(c-1)^{-n}\sum_{(r_{B})_{B\in \pi}\in \N_+^\pi}    \prod_{B\in \pi} ((c-1)N)^{-r_B}\#\{ (\tau_i)\in  \Wm_{r_B}(B)\colon j_{|B}\circ \tau_1\ldots \tau_r=i_{|B} \}.
$$  
Now, for  any ${(r_{B})_{B\in \pi}\in \N_+^\pi}$,   because of  the condition of monotonicity,  for any   collection $(w_B)_{B\in \pi}\in\prod_{B\in \pi}  
\Wm_{r_B}(B),$ there is a unique element of $\Wm_r(\pi)$ whose restrictions to blocks of  $\pi$ is given by $w$, where $r=\sum_{B\in \pi} r_{B}.$  
Hence, $\Wm_r(\pi)$ is in bijection with $\sqcup_{(r_{B})_{B\in \pi}\in \N_+^\pi\colon r=\sum_{B\in \pi} r_{B}} \Wm_{r_B}(B).$ It follows that   the latter expression reads 
\begin{equation}
(c-1)^{-n}\sum_{r\ge 0}    ((c-1)N)^{-r}  \#\{ (\tau_i)\in \Wm_r(\pi): j\circ \tau_1\ldots \tau_r=i\}. \label{iproof:MomentMonotone}
\end{equation}
On the other hand, for any tuple $\mathcal{C}=(\sigma_1,\ldots, \sigma_k)\in S_n^k,$    let $\pi_\mathcal{C}\in\mathcal{P}(n) $ be the set partition given by the orbits of the group $ \langle \sigma_1,\ldots, \sigma_k\rangle $   
and set for any $\pi\ge\nu \ge \pi_\alpha, r\ge 1,$  
$$\Wm_r(\nu, \pi)= \{(\tau_i)_i\in \Wm_r(\pi): \pi_{\alpha, \tau_1,\ldots, \tau_r}=\nu  \}.$$
Then, \eqref{iproof:MomentMonotone}  implies that for any $\pi \in \mathcal{P}(n)_{\ge \pi_\alpha},$ 
\begin{align*}
&\EE_\pi \left(W^{-1}_{i(1) j(1)}, \ldots ,W^{-1}_{i(n) j(n)}\right)=\\
&\sum_{\pi_\alpha\le\nu \le \pi } (c-1)^{-n}\sum_{r\ge 0}    ((c-1)N)^{-r}  \#\{ (\tau_i)\in \Wm_r(\nu,\pi)\colon j\circ \tau_1\ldots \tau_r=i\}.
\end{align*}
Using Lemma \ref{lem:MomentCumulant}, it follows that  for all $i,j\in[N]^n$  and $ \nu\ge \pi_\alpha,$
\begin{align}
 C_{\nu, 1_n}(& W^{-1}_{i(1) j(1)}, \ldots ,W^{-1}_{i(n) j(n)})=\nonumber\\
&  (c-1)^{-n} \sum_{r\ge 0}    ((c-1)N)^{-r}  \#\{ (\tau_i)\in \Wm_r(\nu,1_n)\colon j\circ \tau_1\ldots \tau_r=i\}. \label{iproof:CumulantMonotoneI}
\end{align}
With this equation, we can now look back at \eqref{iproof:CumulantTraces}  and write
 \begin{multline*} C_\ell(\Tr W^{-\mu_1} , \ldots, \Tr W^{-\mu_\ell})\\
 = (c-1)^{-n}\sum_{\substack{r\ge 0} } \sum_{\substack{i,j\in[N]^n:\\  i\circ\alpha= j}} ((c-1)N)^{-r}  \#\{ (\tau_i)\in \Wm_r(\pi_\alpha,1_n)\colon j\circ \tau_1\ldots \tau_r=i\}.
\end{multline*}
For any $\beta \in S_\pi$ and  $r\ge 1,$ let us consider $$
\Wm_r(\pi_\alpha, \pi,\beta)=\{ (\tau_i)_{i=1}^r\in \Wm_r(\pi_\alpha,\pi)\colon  \alpha \tau_1\ldots 
\tau_r= \beta \}.
$$
Fixing $r\ge 1$ in the last sum,  the coefficient of $(c-1)^{-n-r}$ is 
\begin{align*}
&N^{-r}\sum_{\substack{ \beta\in S_n,\\ (\tau_i)_{i=1}^r\in \Wm_r(\pi_\alpha, 1_n,\beta)}}    \#\{i,j\in[N]^n\colon j\circ (\alpha^{-1}\beta)=i, i\circ \alpha= j\} \\
&=\sum_{\substack{ \beta\in S_n,\\ (\tau_i)_{i=1}^r\in \Wm_r(\pi_\alpha, 1_n,\beta)}}  N^{ \#\beta-r}.
\end{align*}
Now according to Riemann-Hurwitz formula \cite[Remark 1.2.21]{LandoZvonkine}, for any $\beta\in S_n$, $(\tau_i)_{i=1}^r\in \Wm_r(\pi_\alpha, 1_n,\beta),$  $
\#\alpha+\#\beta-r= 2-2d, $ for some $d\in \N.$ Therefore, the last right-hand-side is
$$\sum_{r\ge 0, d\ge 0 }  N^{2 -2d-\#\alpha}  \#\Fm_{n,r,d}.$$
The first claim~\eqref{eq:1st} follows by inspection. 
\par
The second claim~\eqref{eq:2nd} follows from the very same argument if, instead of~\eqref{WeingartenMonotone}, we start from the expression
\begin{multline}
\label{WeingartenStrictlyMonotone}
(N)^{-n}\Omega_{n,cN}= \prod_{i=1}^n(c+ N^{-1}J_i)=c^{n} \sum_{r\ge 0}  e_r(J)  (cN)^{-r}   \\
 =c^{n} \sum_{r\ge 0}    (cN)^{-r}   \sum_{(\tau_i)_{i=1}^r\in \Wsm_r} \tau_1\tau_2\ldots \tau_r, 
\end{multline} 
where $\Wsm_r$ is the set of strictly monotone tuples of transpositions $(\tau_1,\ldots, \tau_r)$,    $\tau_1< \tau_2< \ldots< \tau_r$. The proof of formula~\eqref{eq:2nd} proceeds mutatis mutandis with $\Wm_r$ replaced by $\Wsm_r$.
 The details of the calculations are left to the Reader.
\end{proof}

We can now easily conclude.

\begin{proof}[Proof of Theorem \ref{MainTh} and Proposition \ref{Prop:MomentWishart}]  Unfolding the definitions of these statements and of Theorem 
\ref{MainThV2}, we get that for any $\mu\vdash n, r,g\ge 0,$
\begin{align*} 
\sum_{\alpha\in S_n: [\alpha]=\mu}\# \Fm_{n,r,d}(\alpha) &= \sum \Hm_{d}(\mu,\nu) \text{ \,\, and  }\sum_{\alpha\in S_n: [\alpha]=\mu}\# \Fsm_{n,r,d}(\alpha) &= \sum \Hsm_{d}(\mu,\nu), 
\end{align*}
where in the right-hand-sides, we sum over ${\nu \vdash n}$ with $\#\nu= \#\mu+ r-2d.$ The claims of Theorem \ref{MainTh} follow by inspection.
\par
To prove Proposition \ref{Prop:MomentWishart}, let us  recall that any permutation $\sigma\in S_n$ can be uniquely factorized as $\sigma=\tau_1\ldots \tau_{|\sigma|}$ where  $(\tau_1,\ldots,\tau_{|\sigma|})$ is a strictly monotone tuple and $|\sigma|=n-\#\sigma$. Moreover, for any $\alpha\in S_n,$ $(\alpha,\sigma )$ acts transitively on $[n]$ if and only if $(\alpha,\tau_1,\ldots, \tau_{|\sigma|})$ does. Hence  considering  for any constellation  $(\alpha,\beta,(\alpha\beta)^{-1}),$ the unique tuple $(\tau_1,\ldots, \tau_{|\beta|})$ with $\tau_1<\tau_2<\ldots <\tau_{|\beta|}$ such that $\tau_1\ldots \tau_{|\beta|}=\beta$  leads to    $\mathscr{C}_g(\mu,\nu)=\Hsm_{g}(\mu,\nu),$ for any $\nu,\mu\vdash n$ and $g\ge 0.$  
\end{proof}

\begin{rmk} Proposition \ref{Prop:MomentWishart} can be proved more directly along the lines of the proof of Theorem \ref{MainThV2} starting from  the expression $(N)^{-n}\Omega_{n,cN}=\sum_{\beta\in S_n} N^{\#\beta-n} c^{\#\beta},$ without factorizing into transpositions.
\end{rmk}
\begin{rmk}
Let us emphasize that in the proof of Theorem \ref{MainThV2}, the monotonicity condition  was crucial for a factorisation property of the set of  
partitioned monotone paths to get \eqref{iproof:MomentMonotone}.    
\end{rmk}
\begin{rmk}
Proposition \ref{prop:MomentCoeff} can be read as an equality of tensors in $\mathrm{End}((\C^N)^{\ts n})$. The left-hand-side 
commutes with the diagonal action of unitary matrices, whereas the right-hand-side can be viewed as the endomorphism given by the linear 
combination of permutations of tensors. (As already mentioned, this is an instance of \emph{Schur-Weyl duality}.)  It would have been more elegant but less elementary to 
write the above proof in this language. 
\end{rmk}
 \subsection{A combinatorial proof of a duality formula}
The two formulae in Theorem \ref{MainTh} have a striking similarity that we shall use  to deduce 
Corollary \ref{cor:duality}.   Therefor,  we shall use the following  decompositions of monotone  minimal factorisations of a full cycle.

 Denoting by  $\mathrm{T}_n$   the set of all transpositions  of $S_n$, we consider\footnote{We borrow here some notations from \cite{LFacto} but  
do not develop the relation with parking functions which would deserve further consideration.  } for $r\ge 0,$ 
$$\Fm_{n,r}= \{ (\tau_1,\ldots, \tau_r )\in \Ton_n^r :  \# (1 \,2 \ldots\, n) \tau_1\ldots \tau_r= r+1, \tau_1\le \tau_2\le \ldots \le \tau_r  \}$$
and 
$$\Fsm_{n,r}=\{ (\tau_1,\ldots, \tau_r )\in \Fm_r: \tau_1 <\tau_2<\ldots < \tau_r  \},$$ 
where by convention  the empty sequence is the only element of $\Fm_{n,0}=\Fsm_{n,0}=\{(\empty)\}$. We wish to relate the  sets 
  $\Fm_n=\cup_{r\ge 1} \Fm_{n,r}$   and $\Fsm_{n}=\cup_{r\ge 0} \Fsm_{n,r}.$ 
Let us   define a map
$$\Phi_n: \Fm_{n+1}\longrightarrow \Fsm_n$$
setting  for all $w=(\tau_1,\ldots, \tau_r)\in\Fm_{n+1,r}$ given by  $((a_1 \,b_1),\ldots, (a_r \,b_r))$, with $a_i<b_i$ for all $i$, 
$$ \Phi(w)= (\tau_{i_1},\ldots, \tau_{i_l}),$$
where  $(i_1,\ldots, i_l)$ are the record times of the sequence $(b_1,\ldots, b_r),$  before reaching $n+1,$ defined inductively as follows. If $b_{1}=n+1,$ $l=0$ and   $\Phi_n(w)= ().$ If $b_1\le n,$ $i_1=1$ and   $i_{m+1}=\inf\{t> i_{m}: b_t>b_{i_m} \}$ as long as $b_{i_{m+1}}\le n$,   while we 
set $l=m$  when $b_{i_{m+1}}>n$.   For instance,   $\Phi_4( (1\,3) (2\,3) (1\,5)(4\,5)  )= ( (1\,3)).$  The main observation to prove 
the duality of Corollary \ref{cor:duality} can be stated as follows.

\begin{lem} \label{lem:Preimage}For any $l\ge 0,$ $w\in \Fsm_{n,l}$  and $r\ge l,$
$$ \#\Phi_n^{-1}(w)\cap \Fm_{n+1,r}={n-l\choose r-l}. $$
\end{lem}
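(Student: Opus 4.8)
The plan is to describe the fibre $\Phi_n^{-1}(w')\cap\Fm_{n+1,r}$ by reading each monotone minimal factorisation as a \emph{peeling} of the cycle $(1\,2\,\ldots\,n+1)$. One processes the values $v=2,3,\ldots,n+1$ in increasing order: the transpositions $\tau_i=(a_i\,b_i)$ with $b_i=v$ are applied consecutively, and minimality forces each of them to split off an initial arc of the cycle currently containing $v$. Writing $w'=(\sigma_1,\ldots,\sigma_l)$ with $\sigma_j=(a'_j\,b'_j)$ and $b'_1<\cdots<b'_l\le n$, the condition $\Phi_n(w)=w'$ translates into: at $v=b'_j$ the first factor is $\sigma_j$ (a forced ``record cut''); at every other $v\le n$ there is no factor at all (otherwise a new record $\le n$ would appear); and at $v=n+1$ the factors are unconstrained. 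A preimage is thus $w'$ enriched by a choice of extra, non-record cuts at the values $b'_1,\ldots,b'_l,n+1$.

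The engine of the count is the following invariant, which I would prove by induction on the processed value (carried simultaneously over all cycles): whenever $v$ is about to be processed, its cycle, read from $v$, has the form $(v,g_1,\ldots,g_p,t_1,\ldots,t_q)$ with $v<g_1<\cdots<g_p$ and all $t_i<v$ — the still-active (larger) elements sit, in increasing order, immediately after $v$, followed by the already-finished (smaller) ones. The base case is the cycle $(2,3,\ldots,n+1,1)$, and the inductive step is a direct check that a single cut preserves this shape both for the arc it detaches and for the retained tail.

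Two consequences follow. First, the forced record cut at $v$ removes \emph{all} the active elements $g_1,\ldots,g_p$ at once, so the succession of active cycles is determined by $w'$ alone; the non-record cuts only chop the tail $(v,t_{i+1},\ldots,t_q)$, all of whose elements are $<v$ and finished, hence they neither create nor destroy anything seen by a later value. Second, each non-record cut has smaller endpoint one of these tail-elements $t<v$, which then disappears; therefore the non-record smaller endpoints are pairwise distinct, are never equal to a record value $b'_j$ (a record value ends up frozen on a cycle whose largest element is $b'_j$), and the set $T_v$ of available tail-targets at each $v$ depends only on $w'$.

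I would then send $w$ to the set $S(w)$ of smaller endpoints of its non-record transpositions. At each $v$ the cut points may be chosen as an \emph{arbitrary} subset of $T_v$ — cutting at a later target simply absorbs the skipped earlier ones into the detached arc — and these choices are independent across $v$, so $S(w)$ ranges freely over the subsets of the disjoint union $\bigsqcup_v T_v$. Evaluating at the unique complete preimage, which has length $n$ of which $l$ factors are records, identifies this disjoint union with $[n]\setminus\{b'_1,\ldots,b'_l\}$, of size $n-l$, and records $|S(w)|=r-l$. Hence $\Phi_n^{-1}(w')\cap\Fm_{n+1,r}$ is in bijection with the $(r-l)$-subsets of an $(n-l)$-set, which is the claim. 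The main obstacle is the invariant of the second paragraph together with the verification that the cut points form a free family of subsets; the remainder is bookkeeping. The same structural facts can be repackaged through D\'enes' bijection, under which the complete factorisation is a spanning tree of $[n+1]$ and the statement becomes that each $x\in[n]$ is the smaller endpoint of at most one non-record edge, and of none when $x$ is itself a record value.
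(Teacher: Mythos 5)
Your argument is correct and is essentially the paper's own proof: both identify a preimage $w$ with the set of smaller endpoints of its non-record transpositions and show this gives a bijection onto the $(r-l)$-element subsets of $[n]\setminus\{b_1,\ldots,b_l\}$ (the paper's map $\Psi$ is exactly your $S(w)$). The only difference is presentational: the paper derives the needed structure from a static normal form for monotone minimal factorisations (consecutive factors $(a\,b),(c\,d)$ satisfy $[c,d]\supset[a,b)$ or $d=b$ with $c>a$, yielding the interval-disjointness constraints in its displayed decomposition), while you obtain the same facts dynamically from your cycle-shape invariant under peeling.
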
   

\begin{proof}  Let us recall that for any permutation $ \sigma\in S_{n}$ and any transposition $(a\, b),$ $ \#\sigma . ( a \,b) -\# \sigma$ is whether $1,$ 
when $ a$ and $b$ are in the same orbit of $\sigma,$  or $-1$ otherwise.  From this geometric fact follow two observations.  When  $(\tau_1,\ldots, 
\tau_r)\in \Fm_{n,r},$  

\begin{itemize}
\item[1.] for all $m\le r,$  $\# (1\, 2\ldots \, n) \tau_1\ldots \tau_m= m+1$; 
\item[2.] for all $m\le r-1,$  writing $\tau_m=(a\, b)$ and $\tau_{m+1}=(c\, d),$ with $a<b$ and $c<d,$ then 
\begin{itemize}
\item[$\bullet$] whether $[c,d]\supset [a,b),$  
\item[$\bullet$]  or $d=b$ and $c>a.$
\end{itemize} 
\end{itemize}
Hence,  any sequence $(\tau_1,\ldots, \tau_r)\in \mathcal{F}_{n+1,r}$ can be written uniquely as 
\begin{align} \label{iproofDuality:Decomposition}(a_1\, b_1 ) ,(a_2\, b_1),\ldots ,&(a_{i_2-1}\, b_1) ,(a_{i_2}\, b_2) ,\ldots  ,(a_{i_3-1}\, b_2) ,\ldots  \\ & 
\ldots, (a_{i_l}\, b_l ),\ldots, (a_{i_{l+1}-1}\, b_l ), (a_{i_{l+1}} \, n+1)\ldots,  (a_r\, n+1), \nonumber \end{align}
where $1\le l\le r,$ $b_1<b_2<\ldots< b_l,$   $1= i_1<i_2<\ldots < i_l< i_{l+1} \le r+1 $  and  for any $m \in [l+1],$ 
\ben
\text{$ a_{i_{m}}<a_{i_{m}+1}<\ldots <a_{i_{m+1}-1}$ with $\{a_{i_{m}},\ldots, a_{i_{m+1}-1}\} \cap (a_{i_{j}} \, b_j]= \emptyset, $ for all 
$j<m,$}
\een
or as 
\begin{equation}
\label{iproofDuality:DecompositionEmpty}(a_1\, n+1 ) ,(a_2\, n+1),\ldots ,(a_{r}\, n+1),
\end{equation}
with  $ 1\le a_{1}<a_{2}<\ldots <a_{r}\le n$.  When $i_{l+1}=r+1,$ by convention, no transposition acts on $n+1.$  As illustrated in Figure \ref{iproof:FigDuality}, it follows that for any  $0\le l<n$ and $w= ((x_1 \, b_1),
\ldots , (x_l \, b_l))\in \Fsm_{n,l},$  with $x_i<b_i$ for all $i\in [l],$ the map 
$$\Psi:  \Phi_n^{-1}(w)  \longrightarrow\{S\in \mathcal{P}([n]): S \subset [n]\setminus \{b_1,\ldots,b_l \}\} $$
that maps a sequence decomposed as in \eqref{iproofDuality:Decomposition} or \eqref{iproofDuality:DecompositionEmpty} to $\{a_1,\ldots, a_r\} 
\setminus \{a_{i_1},\ldots, a_{i_l}\}$  and resp. $\{a_1,\ldots, a_r\}$  when $l=0$, is a bijection such that $\Psi(\Fm_{n+1,r} \cap \Phi_n^{-1}(w))=\{S\in 
\mathcal{P}([n]\setminus \{b_1,\ldots, b_l\}): \#S=r-l \}. $ The claim follows.
\begin{figure}
\includegraphics[width=.75\columnwidth]{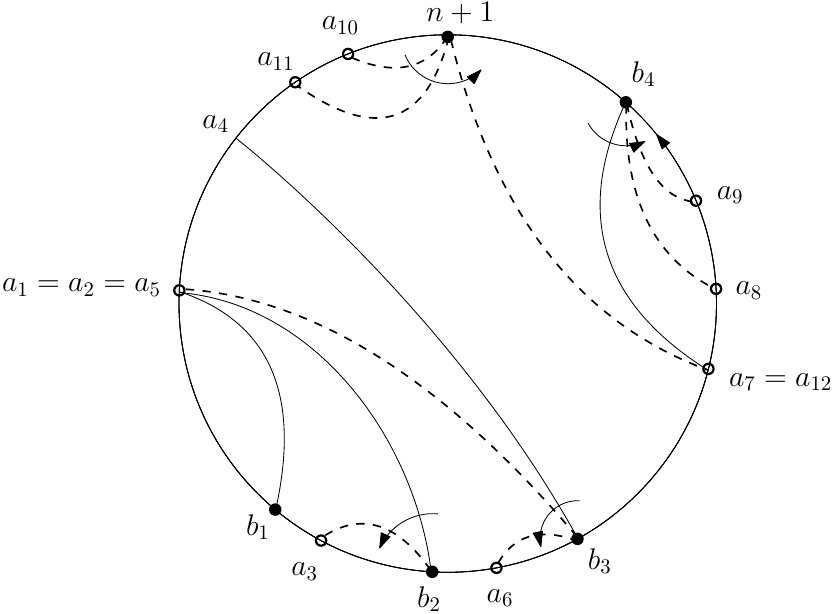}
\caption{ \label{iproof:FigDuality} Representation of the decomposition of an element $w\in\Fm_{n+1,12}$, where each transposition is 
represented by a strand that is dotted when it does not belong to $\Phi_n(w)\in \Fsm_{n,4}.$ The set of white dots is  $\Psi(w)$. The order of 
composition of the transpositions knowing only $\Psi(w)$  and the set of black dots is given first by the counter-clockwise order of the black dots and 
then by the counter-clockwise order of white dots around each black dot.}
\end{figure}
\end{proof}

 \begin{proof}[Proof of Corollary \ref{cor:duality}]   Thanks to Theorem \ref{MainThV2}, applied to $\sigma= (1\ldots n+1),$ 
  \begin{align}
 \lim_{N\to \infty} (c-1)^{2n+1} \EE\tr W^{-n-1} = \sum_{r=0}^n (c-1)^{n-r} \#\Fm_{n+1,r}
 \end{align}
 and applied to $\sigma=(1\ldots n)$, 
 \begin{equation}
  \lim_{N\to \infty}  \EE\tr W^{n}= \sum_{r =0}^{n-1} c^{n-r}  \#\Fsm_{n,r}.
 \end{equation}
 But applying Lemma \ref{lem:Preimage} gives 
\begin{multline*}
  \sum_{r=0}^n (c-1)^{n-r} \#\Fm_{n+1,r}=   \sum_{l=0}^n  \sum_{w \in \Fsm_{n,l}} \sum_{r=l}^n (c-1)^{n-r}\# \Phi_n^{-1}(w)  \cap \Fm_{n+1,r} \\
  =\sum_{l=0}^n  \sum_{w \in \Fsm_{n,l}} \sum_{r=l}^n (c-1)^{n-r} {n-l \choose r-l} = \sum_{l=0}^{n-1} c^{n-l}  \#\Fsm_{n,l}.
 \end{multline*} 
 \end{proof}
 
 \section*{Acknowledgements}
 The authors would like to thank Gregory Berkolaiko, Francesco Mezzadri, James Mingo and Marcel Novaes for feedback and helpful remarks on the first draft of this work.
 \bibliographystyle{abbrv}
\bibliography{biblioMH2}

  \end{document}